\pgfplotsset{compat=1.13}
\definecolor{cycle1}{RGB}{228, 26, 28}
\definecolor{cycle2}{RGB}{55, 126, 184}
\definecolor{cycle3}{RGB}{77, 175, 74}
\definecolor{cycle4}{RGB}{152, 78, 163}
\definecolor{cycle5}{RGB}{255, 127, 0}
\definecolor{cycle6}{RGB}{153, 153, 153}%
\definecolor{cycle7}{RGB}{166, 86, 40}
\definecolor{cycle8}{RGB}{247, 129, 191}
\newcommand*{\bigO}{\mathcal{O}}
\newcommand{\cmark}{\textcolor{cycle3}{\ding{52}}} %
\newcommand{\xmark}{\textcolor{cycle1}{\ding{56}}}
\newcommand{\mehmark}{\mbox{\cmark\textsubscript{\kern-0.45em\tiny\xmark}}}
\title{VERSE: Versatile Graph Embeddings from Similarity Measures}
\newcommand{\mpara}[1]{\medskip\noindent{\bf #1}}
\newcommand{\deepwalk}{\textsc{DeepWalk}\xspace}
\newcommand{\lineemb}{\textsc{LINE}\xspace}
\newcommand{\verseemb}{\textsc{VERSE}\xspace}
\newcommand{\fverseemb}{\textsc{fVERSE}\xspace}
\newcommand{\hypverseemb}{\textsc{hsVERSE}\xspace}
\newcommand{\grarep}{\textsc{GraRep}\xspace}
\newcommand{\nodetovec}{\textsc{Node2vec}\xspace}
\newcommand{\structovec}{\textsc{Struc2vec}\xspace}
\newcommand{\wordtovec}{\textsc{word2vec}\xspace}
\newcommand{\sdne}{\textsc{SDNE}\xspace}
\newcommand{\dngr}{\textsc{DNGR}\xspace}
\newcommand{\hoppe}{\textsc{HOPE}\xspace}
\newcommand*{\dblogcatalog}{\textsf{BlogCatalog}\xspace}
\newcommand*{\coauthor}{\textsf{CoAuthor}\xspace}
\newcommand*{\cocitation}{\textsf{CoCit}\xspace}
\newcommand*{\dyoutube}{\textsf{YouTube}\xspace}
\newcommand*{\dvk}{\textsf{VK}\xspace}
\newcommand*{\dorkut}{\textsf{Orkut}\xspace}
\renewcommand{\G}{\mathrm{G}}
\newcommand{\E}{\mathrm{E}}
\newcommand{\simg}{{sim}_\G}
\newcommand{\sime}{{sim}_\E}
\newcommand{\simadj}{\simg^{\mathrm{ADJ}}}
\newcommand{\simsr}{\simg^{\mathrm{SR}}}
\newcommand{\simppr}{\simg^{\mathrm{PPR}}}
\renewcommand{\L}{\mathcal{L}}
\renewcommand{\P}{\mathcal{P}}
\newcommand{\Q}{\mathcal{Q}}
\newcommand{\KL}{\mathrm{KL}}
\newcommand{\noisev}{\widetilde{v}}
\begin{document}

\setlength{\belowdisplayskip}{1pt} \setlength{\belowdisplayshortskip}{1pt}
\setlength{\abovedisplayskip}{1pt} \setlength{\abovedisplayshortskip}{1pt}

\author{Anton Tsitsulin}
\affiliation{%
  \institution{Hasso Plattner Institute}
}
\email{anton.tsitsulin@hpi.de}
\author{Davide Mottin}
\affiliation{%
  \institution{Hasso Plattner Institute}
}
\email{davide.mottin@hpi.de}
\author{Panagiotis Karras}
\affiliation{%
  \institution{Aarhus University}
}
\email{panos@cs.au.dk}
\author{Emmanuel M\"uller}
\affiliation{%
  \institution{Hasso Plattner Institute}
}
\email{emmanuel.mueller@hpi.de}

\begin{abstract}
{\em Embedding\/} a web-scale information network into a low-dimensional vector space facilitates tasks such as link prediction, classification, and visualization.
Past research has addressed the problem of extracting such embeddings by adopting methods from words to graphs, without defining a clearly comprehensible graph-related objective.
Yet, as we show, the objectives used in past works implicitly utilize similarity measures among graph nodes.
In this paper, we carry the similarity orientation of previous works to its logical conclusion; we propose VERtex Similarity Embeddings (\verseemb{}), a simple, versatile, and memory-efficient method that derives graph embeddings explicitly calibrated to preserve the distributions of a {selected} vertex-to-vertex similarity measure.
\verseemb{} {learns such embeddings by training a single\hyp{}layer neural network. While its default, {\em scalable\/} version does so via sampling similarity information, we also develop a {variant using the \emph{full} information per vertex}.}
Our experimental study on standard benchmarks and real-world datasets demonstrates that \verseemb{}, instantiated with diverse similarity measures, outperforms state-of-the-art methods in terms of precision and recall in major data mining tasks and supersedes them in time and space efficiency, while the scalable sampling-based variant achieves equally good results as the non-scalable full variant.
\end{abstract}

 \maketitle

\section{Introduction}\label{sec:introduction}

Graph data naturally arises in many domains, including social networks, protein networks, and the web.
Over the past years, numerous graph mining techniques have been proposed to analyze and explore such real-world networks.
Commonly, such techniques apply machine learning to address tasks such as node classification, link prediction, anomaly detection, and node clustering.
\begin{figure}[t!]
    \centering
    \subfloat[Community structure]{\resizebox{0.33\columnwidth}{!}{%
\begin{tikzpicture}
    \node (n1) at (0,0) [circle,draw,
        minimum size=.75cm,
        inner sep=0pt,
        fill=white!50!cycle1
        ]{};%
    \node (n2) at (2,1) [circle,draw,
        minimum size=.75cm,
        inner sep=0pt,
        fill=white!50!cycle1
        ]{};%
    \node (n3) at (-1,2) [circle,draw,
        minimum size=.75cm,
        inner sep=0pt,
        fill=white!50!cycle1
        ]{};
    \node (n4) at (1,3) [circle,draw,
        minimum size=.75cm,
        inner sep=0pt,
        fill=white!50!cycle1
        ]{};%
    \node (n5) at (2,6) [rectangle,draw,
        minimum size=.66467cm,
        inner sep=0pt,
        fill=white!50!cycle2
        ]{};
    \node (n6) at (4,7) [rectangle,draw,
        minimum size=.66467cm,
        inner sep=0pt,
        fill=white!50!cycle2
        ]{};
    \node (n7) at (1,8) [rectangle,draw,
        minimum size=.66467cm,
        inner sep=0pt,
        fill=white!50!cycle2
        ]{};
    \node (n8) at (3,9) [rectangle,draw,
        minimum size=.66467cm,
        inner sep=0pt,
        fill=white!50!cycle2
        ]{};
    \draw[-] (n1) -- (n2);
    \draw[-] (n1) -- (n3);
    \draw[-] (n2) -- (n3);
    \draw[-] (n2) -- (n4);
    \draw[-] (n3) -- (n4);
    \draw[-] (n4) -- (n5);
    \draw[-] (n5) -- (n6);
    \draw[-] (n5) -- (n7);
    \draw[-] (n6) -- (n7);
    \draw[-] (n6) -- (n8);
    \draw[-] (n7) -- (n8);
\end{tikzpicture} %
}}\label{sfig:motivation-comm}%
    \hfill
    \subfloat[Roles]{\resizebox{0.33\columnwidth}{!}{%
\begin{tikzpicture}
    \node (n1) at (0,0) [circle,draw,
        minimum size=.75cm,
        inner sep=0pt,
        fill=white!50!cycle1
        ]{};%
    \node (n2) at (2,1) [circle,draw,
        minimum size=.75cm,
        inner sep=0pt,
        fill=white!50!cycle1
        ]{};%
    \node (n3) at (-1,2) [circle,draw,
        minimum size=.75cm,
        inner sep=0pt,
        fill=white!50!cycle1
        ]{};
    \node (n4) at (1,3) [rectangle,draw,
        minimum size=.66467cm,
        inner sep=0pt,
        fill=white!50!cycle2
        ]{};%
    \node (n5) at (2,6) [rectangle,draw,
        minimum size=.66467cm,
        inner sep=0pt,
        fill=white!50!cycle2
        ]{};
    \node (n6) at (4,7) [circle,draw,
        minimum size=.75cm,
        inner sep=0pt,
        fill=white!50!cycle1
        ]{};
    \node (n7) at (1,8) [circle,draw,
        minimum size=.75cm,
        inner sep=0pt,
        fill=white!50!cycle1
        ]{};
    \node (n8) at (3,9) [circle,draw,
        minimum size=.75cm,
        inner sep=0pt,
        fill=white!50!cycle1
        ]{};
    \draw[-] (n1) -- (n2);
    \draw[-] (n1) -- (n3);
    \draw[-] (n2) -- (n3);
    \draw[-] (n2) -- (n4);
    \draw[-] (n3) -- (n4);
    \draw[-] (n4) -- (n5);
    \draw[-] (n5) -- (n6);
    \draw[-] (n5) -- (n7);
    \draw[-] (n6) -- (n7);
    \draw[-] (n6) -- (n8);
    \draw[-] (n7) -- (n8);
\end{tikzpicture} %
}}\label{sfig:motivation-roles}%
    \hfill
    \subfloat[Structural equivalence]{\resizebox{0.33\columnwidth}{!}{%
\begin{tikzpicture}
    \node (n1) at (0,0) [circle,draw,
        minimum size=.75cm,
        inner sep=0pt,
        fill=white!50!cycle1
        ]{};%
    \node (n2) at (2,1) [rectangle,draw,
        minimum size=.66467cm,
        inner sep=0pt,
        fill=white!50!cycle2
        ]{};%
    \node (n3) at (-1,2) [rectangle,draw,
        minimum size=.66467cm,
        inner sep=0pt,
        fill=white!50!cycle2
        ]{};
    \node (n4) at (1,3) [regular polygon,regular polygon sides=5,draw,
        minimum size=.862111cm,
        inner sep=0pt,
        fill=white!50!cycle3
        ]{};%
    \node (n5) at (2,6) [regular polygon,regular polygon sides=5,draw,
        minimum size=.862111cm, %
        inner sep=0pt,
        fill=white!50!cycle3
        ]{};
    \node (n6) at (4,7) [rectangle,draw,
        minimum size=.66467cm, %
        inner sep=0pt,
        fill=white!50!cycle2
        ]{};
    \node (n7) at (1,8) [rectangle,draw,
        minimum size=.66467cm,
        inner sep=0pt,
        fill=white!50!cycle2
        ]{};
    \node (n8) at (3,9) [circle,draw,
        minimum size=.75cm,
        inner sep=0pt,
        fill=white!50!cycle1
        ]{};
    \draw[-] (n1) -- (n2);
    \draw[-] (n1) -- (n3);
    \draw[-] (n2) -- (n3);
    \draw[-] (n2) -- (n4);
    \draw[-] (n3) -- (n4);
    \draw[-] (n4) -- (n5);
    \draw[-] (n5) -- (n6);
    \draw[-] (n5) -- (n7);
    \draw[-] (n6) -- (n7);
    \draw[-] (n6) -- (n8);
    \draw[-] (n7) -- (n8);
\end{tikzpicture} %
}}\label{sfig:motivation-structure}%
    \caption{Three node properties are highlighted on the same graph. Can a single model capture these properties?}\label{fig:new-motivation}
\end{figure}
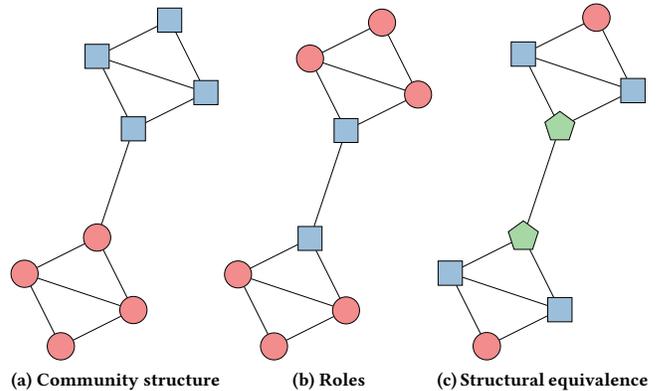 
Machine learning algorithms require a set of expressive discriminant features to characterize graph nodes and edges.
To this end, one can use features representing {\em similarities\/} among nodes~\cite{henderson2011}.
However, feature engineering is tedious work, and the results do not translate well across tasks~\cite{grover16}.
An alternative to feature design is {\em to learn\/} feature vectors, or {\em embeddings\/} by solving an optimization problem in {\em unsupervised\/} fashion.
Yet devising and solving a universal and tractable optimization problem for learning representations has withstood research efforts~\cite{bengio2013}.
One line of research~\cite{tang09,cao15} applies classical dimensionality reduction methods, such as SVD, to similarity matrices over the graph; yet these methods are burdened with constructing the matrix.
While a recent approach~\cite{ou2016} overcomes this impediment, it results in poor quality in prediction tasks due to its {\em linear nature}.

Another line of research aims to generate features capturing neighborhood locality, usually through an objective that can be optimized by Stochastic Gradient Descent (SGD)~\cite{perozzi14,tang15}.
Such methods rely on an implicit, albeit rigid, notion of node neighborhood; yet this one-size-fits-all approach cannot grapple with the diversity of real-world networks and applications.
Grover~et~al.~\cite{grover16} discerned this inflexibility in the notion of the local neighborhood; to ameliorate it, they proposed \nodetovec, which biases the exploration strategy of~\cite{perozzi14} using two hyperparameters.
Yet this {\em hyperparameter\hyp{}tuned\/} approach raises a cubic worst-case space complexity and compels the user to traverse several feature sets and gauge the one that attains the best performance in the downstream task.
Besides, a local neighborhood, even when found by hyperparameter tuning, still represents only one locality-based class of features; hence, {\nodetovec{}} does not adequately escape the rigidity it tries to mend.

We argue that features extracted by a more {\em versatile\/} similarity notion than that of a local neighborhood would achieve the flexibility to solve diverse {data mining} tasks in a large variety of graphs.
Figure~\ref{fig:new-motivation} makes a case for such a versatile similarity notion by exposing three distinct kinds of similarity on a graph: {\em community structure\/} guides community detection tasks, {\em roles\/} are typically used in classification, while {\em structural equivalence\/} defines peer correspondences in knowledge graphs.
As real-world tasks rely on a mix of such properties, a versatile feature learning algorithm should be capable of capturing all such similarities.

In this paper, we propose \verseemb, the first, to our knowledge, versatile graph embedding method that explicitly learns {\em any\/} similarity measures among nodes.
In its learning core, \verseemb{} stands between deep learning approaches~\cite{cao16,wang16} on the one hand and the direct decomposition of the similarity matrix~\cite{tang09,cao15} on the other hand.
Instead, \verseemb{} trains a simple, yet expressive, single-layer neural network to reconstruct similarity distributions between nodes.
Thereby, it outperforms previous methods in terms of both runtime and quality on a variety of large real networks and tasks.

Thanks to its ability to choose any appropriate similarity measure for the task at hand, \verseemb{} adjusts to that task without needing to change its core. Thereby, it fully ameliorates the rigidity observed in~\cite{grover16}, and integrates representation learning with feature engineering: any similarity measure, including those developed in feature engineering, can be used as input to \verseemb. For the sake of illustration, we instantiate our generic methodology using three popular similarity measures, namely Personalized PageRank (PPR)~\cite{page1999}, SimRank~\cite{jeh2002}, and adjacency similarity. We also show that versatility {\em does not\/} imply a new burden to the user, merely substituting hyperparameter tuning with similarity measure tuning: using PPR as a {\em default\/} choice for the similarity measure leads to good performance in nearly all tasks and networks we examined.

We summarize our contributions as follows. 
\begin{itemize}
	\item We propose a versatile %
    framework for graph embeddings that explicitly learns the distribution of \textbf{any} vertex similarity measure for each graph vertex.
    \item We %
    {interpret previous graph embeddings {through the lens of} our similarity framework, and instantiate \verseemb{} with Personalized PageRank, SimRank, and Adjacency similarity.} %
    \item We devise an efficient algorithm, linear in graph size, based on a single-layer neural network minimizing the divergence from real to reconstructed similarity distributions.
    \item In a thorough experimental evaluation, we show that \verseemb{} outperforms the state-of-the-art approaches in various graph mining tasks in quality while being even more efficient. %
\end{itemize}
\section{Related Work}\label{sec:related-work}

In the absence of a general-purpose representation for graphs, graph analysis tasks require domain experts to craft features~\cite{henderson2011,akoglu2010} or to use specialized feature selection algorithms~\cite{tangj2012,perozzi2014f}.
{Recently, specialized methods were introduced to learn representations of different graph parts~\cite{narayanan2016, abu2017} and graphs with annotations on nodes~\cite{huang2017, zhanga2017}, or edges~\cite{wang2017, hu2017}.
We focus on learning representations of \emph{nodes} in graphs without \emph{any} prior or additional information other than graph structure.}

\mpara{Traditional feature learning} learns features by compressing representations such as the Laplacian or adjacency matrix to a low-dimensional space.
Early works in this area include spectral techniques~\cite{belkin2001} and nonlinear dimensionality reduction~\cite{roweis2000, tenenbaum2000}.
In another vein, Marginal Fisher Analysis~\cite{yan2007} analyzes the dimensionality reduction of a point data set as the embedding of a graph capturing its statistic and geometric properties.
Such methods cannot be applied to large graphs, as they operate on dense matrices.

Some efforts have been made to overcome this limitation using enhanced linear algebra tools.
Ahmed~et~al.~\cite{ahmed2013} adopt stochastic gradient optimization for fast adjacency matrix eigendecomposition; Ou~et~al.~\cite{ou2016} utilize sparse generalized SVD to generate a graph embedding, \hoppe, from a similarity matrix amenable to decomposition into two sparse proximity matrices.
\hoppe{} is the first to support diverse similarity measures; however, it still requires the entire graph matrix as input and views the problem as one of linear dimensionality reduction rather than as one of nonlinear learning.
This way, it deviates not only from current research on graph embeddings but also from older works~\cite{yan2007}.

\mpara{Neural methods for representation learning.}
Advances in machine learning have led to the adoption of neural methods for learning representations~\cite{bengio2013}.
Building on the success of deep learning in domains such as image processing~\cite{krizhevsky2012} and Natural Language Processing (NLP)~\cite{bengio2003, mikolov13, pennington2014}, \wordtovec~\cite{mikolov13} builds word embeddings by training a single-layer neural network to guess the contextual words of a given word in a text.
Likewise, GloVe~\cite{pennington2014} learns a word space through a stochastic version of SVD in a transformed cooccurrence matrix.
While such text-based methods inherently take neighbor relationships into account, they require conceptual adaptations to model graphs~\cite{perozzi14}.

\mpara{Neural Graph Embeddings.}
The success of neural word embeddings inspired natural extensions towards learning graph representations~\cite{tu16, grover16, wang16, cao15, cao16, perozzi14}.
\deepwalk{}~\cite{perozzi14} first proposed to learn latent representations in a low\hyp{}dimensional vector space exploiting {\em local\/} node neighborhoods.
It runs a series of random walks of fixed length from each vertex and creates a matrix of \(d\)-dimensional vertex representations using the SkipGram algorithm of~\cite{mikolov13}.
These representations maximize the posterior probability of observing a neighboring vertex in a random walk.
\deepwalk{} embeddings can inform classification tasks using a simple linear classifier such as logistic regression.
\grarep~\cite{cao15} suggests using Singular Value Decomposition (SVD) on a log-transformed \deepwalk{} transition probability matrix of different orders, and then concatenate the resulting representations.
\structovec{}~\cite{ribeiro2017} rewires the graph to reflect isomorphism among nodes and capture structural similarities, and then derives an embedding relying on the \deepwalk{} core.
Works such as {\mbox{\cite{wang16,cao16}}} investigate deep learning approaches for graph embeddings.
Their results amount to complex models that require elaborate parameter tuning and computationally expensive optimization, leading to time and space complexities unsuitable for large graphs.

Nevertheless, all \deepwalk-based approaches use objective functions that are not tailored to graph structures. Some works~\cite{tang15,grover16,ribeiro2017} try to infuse graph-native principles into the learning process.
\lineemb{}~\cite{tang15} proposed graph embeddings that capture more elaborate proximity notions.
However, even \lineemb's notion of proximity is restricted to the immediate neighborhoods of each node; that is insufficient to capture the complete palette of node relationships~\cite{ou2016, grover16, ribeiro2017}.
Furthermore, \nodetovec~\cite{grover16} introduces two hyperparameters to regulate the generation of random walks and thereby tailor the learning process to the graph at hand in {\em semi-supervised\/} fashion.
However, \nodetovec{} remains attached to the goal of preserving local neighborhoods and requires laborious tuning for each dataset and each task.

\begin{table}
\begin{center}
{
\newcolumntype{C}{>{\centering\arraybackslash}X}
\begin{tabularx}{\columnwidth}{XCCCCC}
\multicolumn{1}{c}{} & \multicolumn{3}{c}{\textbf{Algorithm}} & \multicolumn{2}{c}{\textbf{Similarity}} \\
\cmidrule(lr){2-4}\cmidrule(lr){5-6}
\emph{method} & Local\hspace*{-5mm} & Scalable & Nonlinear & Global & Versatile \\ 
\midrule
DeepWalk~\cite{perozzi14}                & \multirow{2}{*}{\cmark{}}\hspace*{-5mm} & \multirow{2}{*}{\cmark{}} & \multirow{2}{*}{\cmark{}} & \multirow{2}{*}{\cmark{}} & \multirow{2}{*}{\xmark{}} \\
Node2vec~\cite{grover16} \\ 
LINE~\cite{tang15}                      & \cmark{}\hspace*{-5mm} & \cmark{} & \cmark{} & \xmark{} & \xmark{} \\
GraRep~\cite{cao15}                    & \xmark{}\hspace*{-5mm} & \xmark{} & \cmark{} & \cmark{} & \xmark{} \\
\sdne{}~\cite{wang16}                     & \cmark{}\hspace*{-5mm} & \xmark{} & \cmark{} & \xmark{} & \xmark{} \\
\dngr{} ~\cite{cao16}                    & \xmark{}\hspace*{-5mm} & \xmark{} & \cmark{} & \cmark{} & \xmark{} \\ %
\hoppe{}~\cite{ou2016}                  & \xmark{}\hspace*{-5mm} & \cmark{} & \xmark{} & \cmark{} & \cmark{} \\
\midrule
\verseemb{}               & \cmark{}\hspace*{-5mm} & \cmark{} & \cmark{}  & \cmark{}& \cmark{} \\
\bottomrule
\end{tabularx}
}
\end{center}
\caption{Outline of related work in terms of fulfilled (\cmark) and missing (\xmark) properties of algorithm and similarity measure.}\label{tbl:relatedwork}
\vspace{-7mm}
\end{table}

\mpara{{Overview.}} 
{%
Table~\ref{tbl:relatedwork} outlines five desirable properties for a graph embedding, and the extent to which previous methods {possess} them.
We distinguish between properties of \emph{algorithms}, {on the one hand}, and those of any {implicit or explicit} %
\emph{similarity measure} among nodes a method may express, {on the other hand}.
}

\begin{itemize}[leftmargin=0.1cm,itemindent=.3cm,labelwidth=\itemindent,labelsep=0cm,align=left]
\item {\bf local}: not requiring the entire graph matrix as input; \grarep{}, \dngr{}, and \hoppe{} fail in this respect.
\item {\bf scalable}: capable to process graphs with more than \(10^6\) nodes in less than a day; some methods fail in this criterion due to the dense matrix (\grarep{}), deep learning computations (\sdne{}), or both (\dngr{}).
\item {\bf nonlinear}: employing nonlinear transformations; \hoppe{} relies on a linear dimensionality reduction method, SVD\@; that is detrimental to its performance on building graph representations, just like linear dimensionality reduction methods fail to confer the advantages of their nonlinear counterparts in general~\cite{lee07book}.
\item {\bf global}: capable to model relationships between {\em any\/} pair of nodes; \lineemb{} and \sdne{} do not share this property as they fail to look beyond a node's immediate neighborhood. %
\item {\bf versatile}: supporting diverse similarity functions; \hoppe{} does so, yet is compromised by its {\em linear\/} character.
\end{itemize}

\section{Versatile Graph Embedding}\label{sec:verse-framework}

\verseemb{} possesses all properties mentioned {in our taxonomy}; it employs {\em nonlinear\/} transformation, desirable for dimensionality reduction~\cite{lee07book}; it is {\em local\/} in terms of the input it requires per node, but {\em global\/} in terms of the potential provenance of that input; it is {\em scalable\/} as it is based on sampling, and {\em versatile\/} by virtue of its generality.

\subsection{VERSE Objective}\label{subsec:problem-definition}

Given a graph \(\G = (V, E)\), where \(V = (v_1, \ldots, v_n)\), \(n = |V|\), is the set of vertices and \(E \subseteq (V \times V)\) the set of edges, we aim to learn a {\em nonlinear representation\/} of vertices \(v \in V\) to \(d\)-dimensional embeddings, where \(d \ll n\). 
Such representation is encoded into a \(n \times d\) matrix \(W\); the embedding of a node \(v\) is the row \(W_{v,\cdot}\) in the matrix; we denote it as \(W_{v}\) for compactness.

Our embeddings reflect {\em distributions\/} of a given graph similarity \(\simg : V \times V \rightarrow \mathbb{R}\) for every node \(v \in V\).
As such, we require that the similarities from any vertex \(v\) to all other vertices \(\simg(v,\cdot)\) are amenable to be interpreted as a distribution with \(\sum_{u \in V}{\simg(v,u)} = 1\) for all \(v \in V\).
We aim to devise $W$ by a scalable method that requires neither the \(V \times V\) {stochastic} similarity matrix nor its explicit materialization.

The corresponding node-to-node similarity in the embedded space is $\sime : V \times V \rightarrow \mathbb{R}$. As an optimization objective, we aim to minimize the Kullback-Leibler (KL) divergence from the given similarity distribution \(\simg{}\) to that of \(\sime{}\) in the embedded space:

\begin{equation}\label{eq:kl-objective}
\sum_{v \in V} \KL\left(\simg(v,\cdot) \;||\; \sime(v, \cdot)\right)
\end{equation}

We illustrate the usefulness of this objective using a small similarity matrix.
Figure~\ref{fig:karate-matrix} shows (a) the Personalized PageRank matrix, (b) the reconstruction of the same matrix by \verseemb, and (c) the reconstruction of the same matrix using SVD. %
It is visible that the nonlinear minimization of KL divergence between distributions preserves most of the information in the original matrix, while the linear SVD-based reconstruction fails to differentiate some nodes.
\begin{figure}[t]
\centering
    \subfloat[Similarity]{\includegraphics{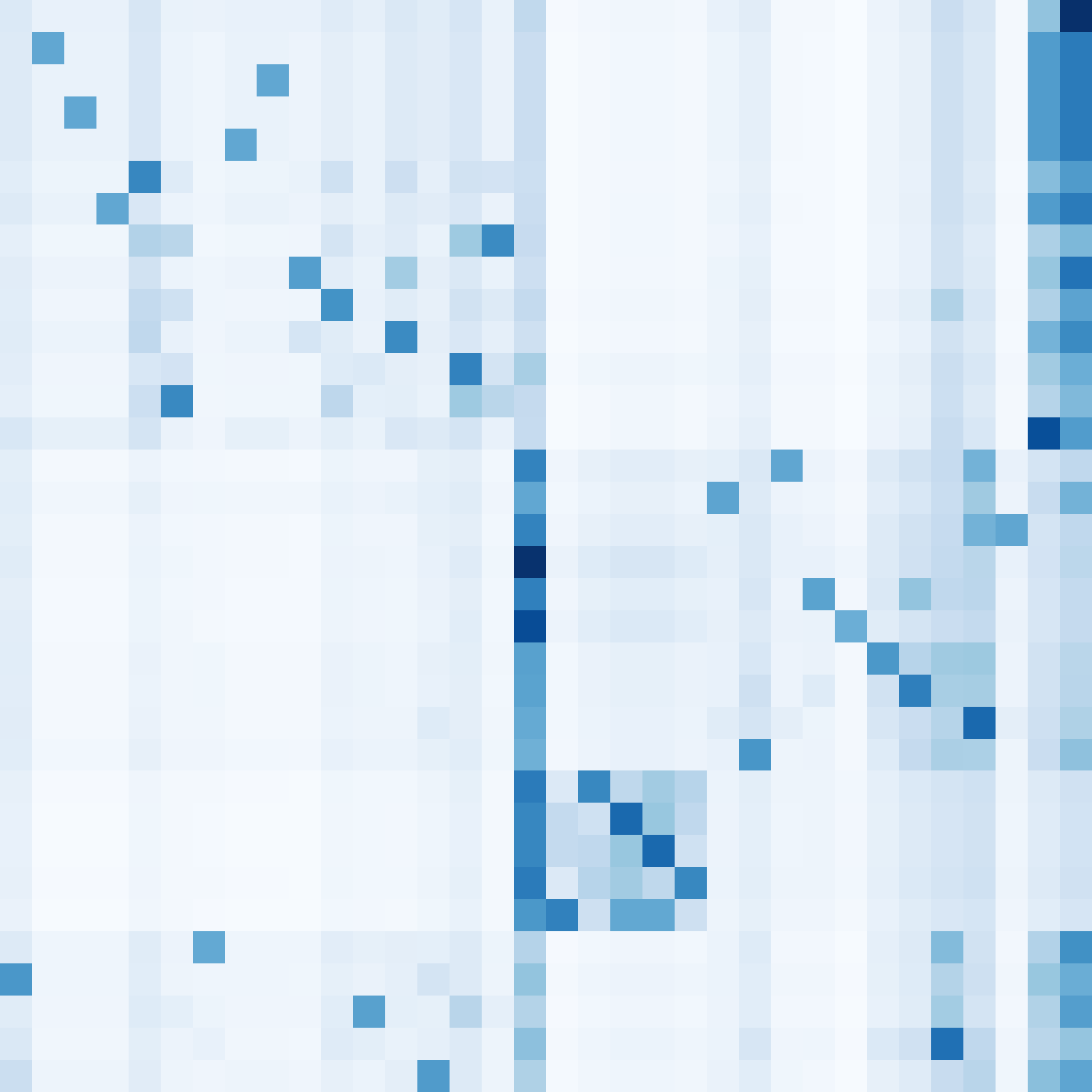}}\label{sfig:karate-real}
    \hfill
    \subfloat[\verseemb]{\includegraphics{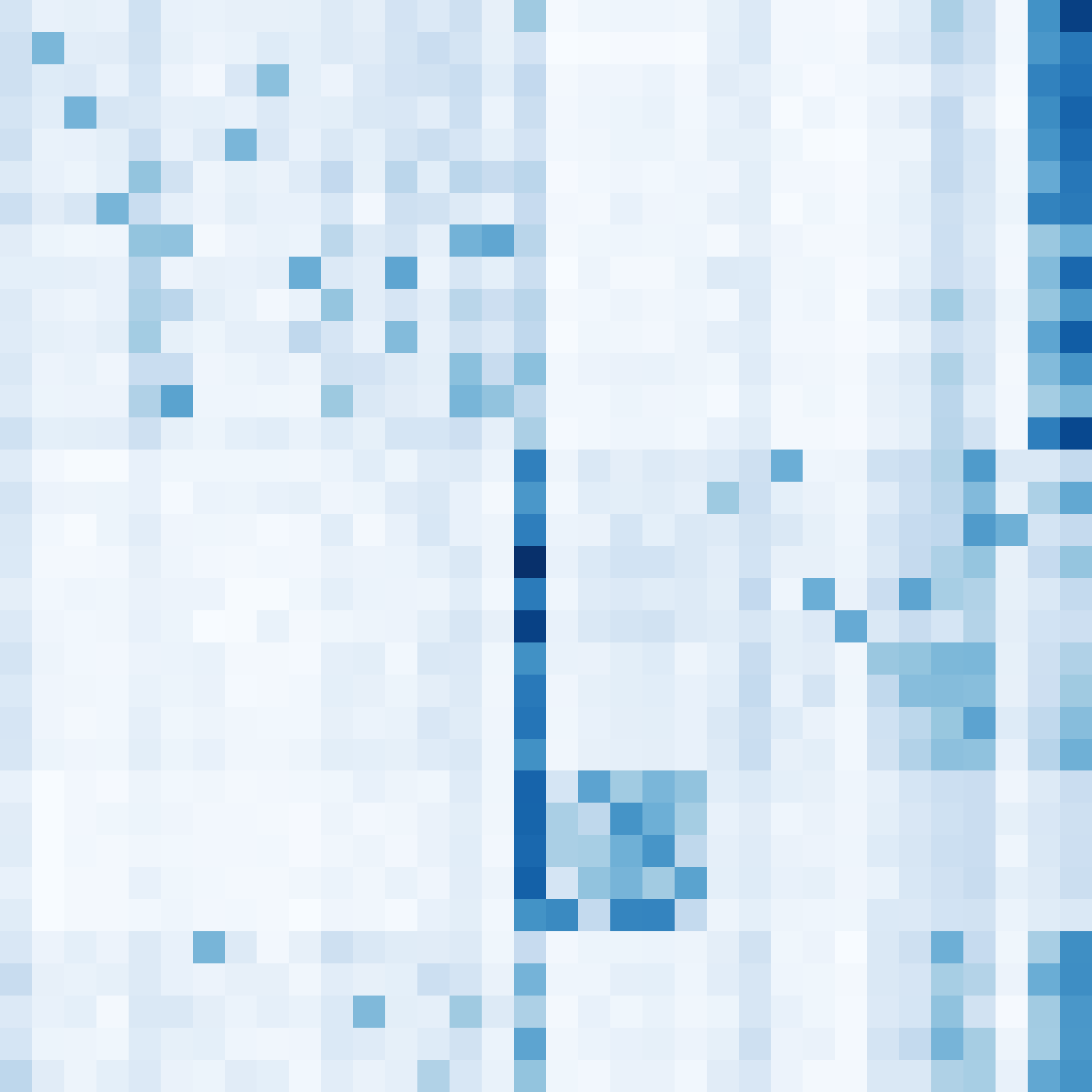}}\label{sfig:karate-verse}
    \hfill
    \subfloat[SVD]{\includegraphics{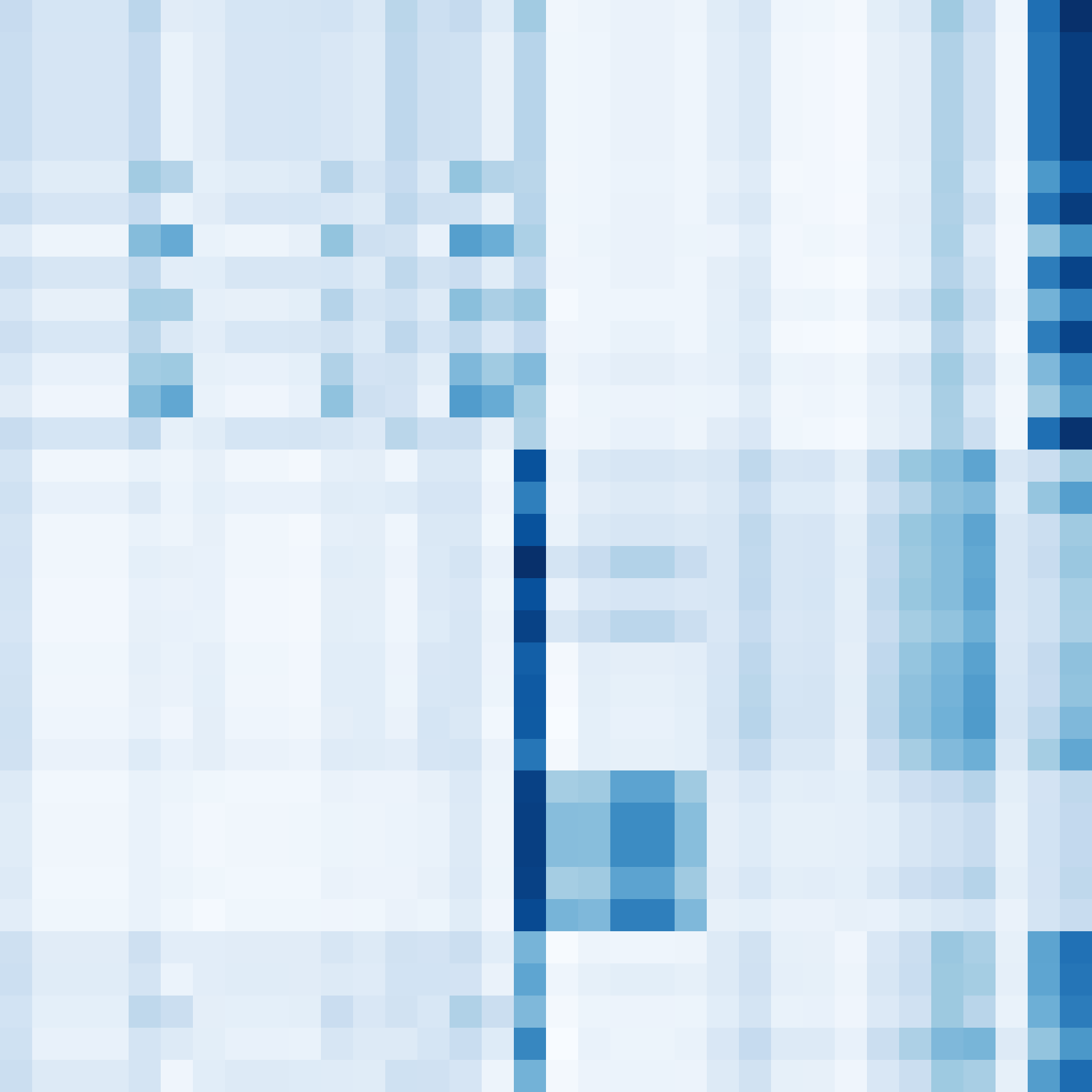}}\label{sfig:karate-svd}
\caption{An example similarity matrix and its reconstructions by \verseemb{} and SVD\@. Karate club graph~\protect\cite{zachary1977}, dimensionality \(d=4\) for both methods.\label{fig:karate-matrix}}
\end{figure} 

\subsection{VERSE Embedding Model}\label{subsec:model-optimization}

We define the unnormalized distance between two nodes \(u,v\) in the embedding space as the dot product of their embeddings \(W_u \cdot W_v^\top{}\).
The similarity distribution in the embedded space is then normalized with softmax:

\begin{equation}\label{eq:softmax-sime}
\sime(v, \cdot) = \frac{\exp(W_v W^{\top})} {\sum_{i=1}^{n}{\exp\left(W_v \cdot W_i\right)} }
\end{equation}

By Equation~\ref{eq:kl-objective}, we should minimize the \(\KL{}\)-divergence from \(\simg{}\) to \(\sime{}\); omitting parts dependent on \(\simg{}\) only, this objective is equivalent to minimizing the cross-entropy loss function~\cite{goodfellow2016}:

\begin{equation}\label{eq:cross-entropy}
\L = -\sum_{v \in V} {\simg(v,\cdot) \log\left(\sime\left(v,\cdot\right)\right) }
\end{equation}

We can accommodate this objective by stochastic gradient descent, which allows updating the model on each node singularly. However, a na{\"\i}ve version of gradient descent would require the full materialization of $\sime$ and $\simg$. Even in case $\simg$ is easy to compute on the fly, such as the adjacency matrix, the softmax in Equation~\ref{eq:softmax-sime} still has to be normalized over all nodes in the graph.

We use Noise Contrastive Estimation (NCE)~\cite{gutmann2010,mnih2012}, which allows us to learn a model that provably converges to its objective (see~\cite{gutmann2012},~Theorem~2).
NCE trains a binary classifier to distinguish between node samples coming from the empirical similarity distribution \(\simg{}\) and those generated by a noise distribution \(\Q{}\) over the nodes. Consider an auxiliary random variable \(D\) for node classification, such that \(D=1\) for a node drawn from the empirical distribution and \(D=0\) for a sample drawn from the noise distribution. Given a node \(u\) drawn from some distribution \(\P{}\) and a node \(v\) drawn from the distribution of \(\simg\left(u,\cdot\right)\), we draw \(s \ll n\) nodes \(\noisev{}\) from \(\Q(u)\) and use logistic regression to minimize the negative log-likelihood:

\vspace{-2mm}
\begin{equation}
\begin{aligned}\label{eq:nce}
\L_{NCE} = \sum_{\mathclap{\substack{
        u\sim \P \\
        v \sim \simg(u,\cdot)
}}}
\Big[ & \log \textstyle\Pr_W(D=1|\sime(u, v))+ \\
s \mathbb{E}_{\noisev \sim \Q(u)} & \log \textstyle\Pr_{W}(D=0|\sime(u, \noisev))\Big]
\end{aligned}
\end{equation}

\noindent
where \(\Pr_W\) is computed from $W$ as a sigmoid \(\sigma(x) = (1+e^{-x}){}^{-1}\) of the dot product between vectors $W_u$ and $W_v$, while we compute \(\sime\left(u,\cdot\right)\) {\em without\/} the normalization of Equation~\ref{eq:softmax-sime}. As the number of noise samples \(s\) increases, the NCE derivative provably converges to the gradient of cross-entropy~\cite{mnih2012}; %
thus, by virtue of NCE's asymptotic convergence guarantees, we are in effect minimizing the \(\KL{}\)-divergence from \(\simg{}\). NCE's theoretical guarantees depend on $s$, yet small values work well in practice~\cite{mnih2012}. In our experiments, we use $s = 3$.
These convergence guarantees of NCE are not affected by choice of distributions \(\P{}\) and \(\Q{}\)  (see~\cite{gutmann2012},~Corollary~5); however, its performance is empirically dependent on \(\Q{}\)~\cite{labeau2017}.

\subsection{Instantiations of \verseemb}\label{subsec:ppr-instantiation}

While \verseemb{} can be used with any similarity function, we choose to instantiate our model to widely used similarities $\simg$, namely Personalized PageRank (PPR), Adjacency Similarity, and SimRank. 

\mpara{Personalized PageRank.} %
Personalized PageRank~\cite{page1999} is a common similarity measure among nodes, practically used for many graph mining tasks~\cite{grover16,lu2011}.

\begin{definition}\label{def:ppr}
Given a starting node distribution \(s\), {\em damping factor\/} \(\alpha{}\), and the normalized adjacency matrix $A$, the \textbf{Personalized PageRank} vector \(\pi_s\) is defined by the recursive equation:

\vspace{-2mm}
$$\pi_s = \alpha s +(1-\alpha)\pi_{s}A$$
\end{definition}
\vspace{-2mm}

The stationary distribution of a random walk with restart with probability \(\alpha{}\) converges to PPR~\cite{page1999}. Thus, a sample from \(\simg(v, \cdot)\) is the last node in a single random walk from node \(v\). The damping factor \(\alpha{}\) controls the average size of the explored neighborhood. In Section~\ref{subsec:verse-relationships} we show that $\alpha$ is tightly coupled with the window size parameter $w$ of \deepwalk{} and \nodetovec.

\mpara{Adjacency similarity}.\label{ssec:adj-sim}
A straightforward similarity measure is the normalized adjacency matrix; this similarity corresponds to the \lineemb-1 model and takes into account only the immediate neighbors of each node. More formally, given the out degree \(Out(u)\) of node \(u\)

\vspace{-2mm}
\begin{align}\label{eq:adj-sim}
\simadj(u,v) = 
  \begin{cases} 
   1/Out(u) & \text{if } (u,v) \in E \\
   0       & \text{otherwise}
  \end{cases}
\end{align}

We experimentally demonstrate that \verseemb{} model is effective even in preserving the adjacency matrix of the graph.

\mpara{SimRank}.\label{ssec:simrank}
SimRank~\cite{jeh2002} is a measure of structural relatedness between two nodes, based on the assumption that two nodes are similar if they are connected to other similar nodes; SimRank is defined recursively as follows:

\vspace{-2mm}
\begin{equation}\label{eq:simrank-sim}
\simsr(u,v) = \frac{C}{\left|I(u)\right| \left|I(v)\right|}
 \sum_{i=1}^{\left|I(u)\right|}\sum_{j=1}^{\left|I(v)\right|}
 \simsr(I_i(u), I_j(v))
\end{equation}

where \(I(v)\) denotes the set of in-neighbors of node \(v\), and \(C\) is a number between \(0\) and \(1\) that geometrically discounts the importance of farther nodes. SimRank is a recursive procedure that involves computationally expensive operations: the straightforward method has the complexity of \(\bigO(n^4)\).%

SimRank values can be approximated up to a multiplicative factor dependent on $C$ through SimRank-Aware Random Walks (SARW)~\cite{jiang2017}.  
SARW computes a SimRank approximation through two reversed random walks with restart where the damping factor $\alpha$ is set to $\alpha = \sqrt{C}$. A reversed random walk traverses any edge \((u,v)\) in the opposite direction \((v,u)\). Since we are only interested in the distribution of each \(\simsr(v, \cdot)\), we can ignore the multiplicative factor in the approximation~\cite{jiang2017} that has little impact on our task.

\vspace{-2mm}
\begin{algorithm}[h]
    \begin{algorithmic}[1]
    \Function{VERSE}{$G, \simg{}, d$}
        \State{\(W \gets  \mathcal{N}\left(0,\,d^{-1}\right)\) \Comment{} \(\text{With}~W \in \mathbb{R}^{|V|\times d~}\)}
        \Repeat{}\label{alg:verse-loop}
            \State{\(u \sim \P{}\)} \Comment{} Sample a node
            \State{\(v \sim \simg(u)\)} \Comment{} Sample positive example
            \State{\(W_u, W_v \gets \textproc{Update}(u, v, 1)\)}
            \For{\(i \gets 1 \dotso s\)}
            \State{\(\noisev \sim \Q(u)\)} \Comment{} Sample negative example
            \State{\(W_u, W_\noisev \gets \textproc{Update}(u, \noisev, 0)\)}
            \EndFor{}
        
        \Until{converged}
        \State{\Return{\(W\)}}
    \EndFunction{}

    \Function{Update}{$u, v, D$} \Comment{} Logistic gradient update
    \State{\(g \gets (D - \sigma(W_u\cdot{}W_v))*\lambda{}\)}
    \State{\(W_u \gets g * W_v\)}
    \State{\(W_v \gets g * W_u\)}
    \EndFunction{}
    \end{algorithmic}
    \caption{\verseemb}\label{alg:verse}
\end{algorithm}
\vspace{-3mm} %
\subsection{VERSE Algorithm}\label{subsec:algorithm}

Algorithm~\ref{alg:verse} presents the overall flow of {\verseemb}.
Given a graph, a similarity function \(\simg{}\), and the embedding space dimensionality \(d\), we initialize the output embedding matrix \(W\) to \(\mathcal{N}(0, \frac{1}{d})\).
{Then, we  optimize our objective (Equation~\ref{eq:nce}) by gradient descent using {the} NCE algorithm discussed in the previous section. {To do so}, we repeatedly sample a node from the positive distribution \(\P{}\), sample the \(\simg{}\) (e.g. pick a neighboring node), and draw \(s\) negative examples.}
The \(\sigma{}\) in Line~13 represents the sigmoid function {\(\sigma=(1+e^{-x})^{-1}\)}, and \(\lambda{}\) the learning rate. 
We choose \(\P{}\) and \(\Q{}\) to be distributed uniformly by \(\mathcal{U}(1,n)\).

{As a strong baseline for applications {handling smaller} graphs, we also consider an elaborate, exhaustive variant of \verseemb, which computes {\em full similarity distribution\/} vectors per node instead of performing NCE-based sampling.
We name this variant \fverseemb{} and include it in our experimental study.}
Figure~\ref{fig:scalability-multiplot} presents our measures on the ability to reconstruct a similarity matrix for (i) \verseemb{} using NCE; (ii) \verseemb{} using Negative Sampling (NS) (also used in \nodetovec); and (ii) {the exhaustive \fverseemb{} variant.}
{We observe that, while} NCE approaches the exhaustive method in terms of matching the ground truth top-$100$ most similar nodes, NS fails to deliver the same quality.

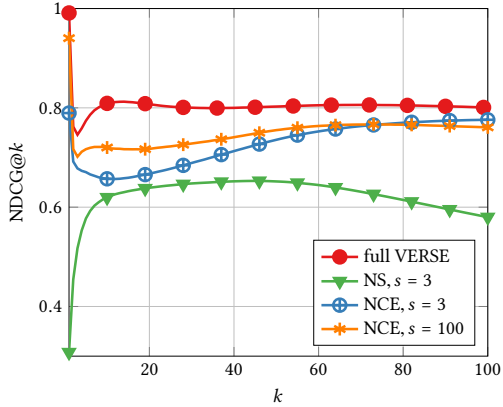
\begin{figure}[h]
\vspace{-3mm}
\resizebox{0.8\columnwidth}{!}{%
\begin{tikzpicture}
        \begin{axis}[                   						 
        grid=major,
        xlabel=\(k\), %
        ylabel=\textsc{NDCG}@\(k\),
        xmin=1,
        xmax=100,
        ymin=0.3,
        ymax=1,
        mark repeat={9},
        legend cell align=left,
        legend pos=south east
]
\addplot[very thick,color=cycle1,mark=*,mark size=3pt] coordinates {
(1, 0.9910977502)
(2, 0.76218008340155718)
(3, 0.74585663964000115)
(4, 0.75870278085057385)
(5, 0.77415446256467346)
(6, 0.7861357564846585)
(7, 0.79503473997336838)
(8, 0.80086170106630938)
(9, 0.80549924995920696)
(10, 0.80901917065289408)
(11, 0.81065127257342717)
(12, 0.81168191603158779)
(13, 0.81222670209635106)
(14, 0.81242759077186455)
(15, 0.81177362624862059)
(16, 0.81128263945889312)
(17, 0.81029992575936993)
(18, 0.80937157403319449)
(19, 0.80841511838423985)
(20, 0.80721142851767469)
(21, 0.80620523663056443)
(22, 0.80505527607014105)
(23, 0.80404938171237117)
(24, 0.80320371756717468)
(25, 0.80243916489491296)
(26, 0.8018466246409468)
(27, 0.80144434258182284)
(28, 0.80094213711812035)
(28, 0.80094213711812035)
(29, 0.80047667421442681)
(30, 0.80032671385750054)
(31, 0.80003752108374271)
(32, 0.79980012965484959)
(33, 0.79970699787350441)
(34, 0.79951082032826726)
(35, 0.79954729122733204)
(36, 0.79958893050112556)
(37, 0.79966152228729548)
(38, 0.79977206190969363)
(39, 0.79987122253003595)
(40, 0.79997630659226304)
(41, 0.80027183025004156)
(42, 0.80058156300929073)
(43, 0.80071368603325954)
(44, 0.80103231072020531)
(45, 0.80126561788362871)
(46, 0.8015464673456375)
(47, 0.80181459420678358)
(48, 0.80209122130650212)
(49, 0.80237022544262504)
(50, 0.80258751638351578)
(51, 0.80284629725166112)
(52, 0.80319184808360433)
(53, 0.80346842900891746)
(54, 0.8037478268238305)
(55, 0.80405791658824044)
(56, 0.80423162168074969)
(57, 0.80449699969523125)
(58, 0.8047686792250911)
(59, 0.80497629663397186)
(60, 0.80516674759849005)
(61, 0.80533378035148184)
(62, 0.80547863729062752)
(63, 0.80556248938211483)
(64, 0.80568298988643572)
(65, 0.80570193689809022)
(66, 0.80581934727556959)
(67, 0.80583548978130215)
(68, 0.8058618598485493)
(69, 0.80591358152257486)
(70, 0.80594347014404288)
(71, 0.80591116104366833)
(72, 0.80585877385853832)
(73, 0.80578984141047139)
(74, 0.80570663555738153)
(75, 0.8055882528446876)
(76, 0.80555415247152096)
(77, 0.80551268738720005)
(78, 0.80540600305196064)
(79, 0.80526430940283755)
(80, 0.80513118127928784)
(81, 0.80499354994836103)
(82, 0.80481441099340778)
(83, 0.80458981254992856)
(84, 0.80444340560992689)
(85, 0.8042326017780681)
(86, 0.80399323086207897)
(87, 0.80377664532404247)
(88, 0.80360458558957848)
(89, 0.80336005460103888)
(90, 0.80311701421359794)
(91, 0.8029034834191745)
(92, 0.80269535698526051)
(93, 0.80245734064939378)
(94, 0.80227839745385332)
(95, 0.80198909974602817)
(96, 0.8016795736622887)
(97, 0.80141480244183738)
(98, 0.80114683140303422)
(99, 0.80085313404889902)
(100, 0.80058516336962449)
};\addlegendentry{full \verseemb};\label{line:verse-ppr-blog2}
\addplot[very thick,color=cycle3,mark=triangle*,mark options={rotate=180},mark size=3pt] coordinates {
(1, 0.30769976726144299)
(2, 0.45431226309503819)
(3, 0.51733249308223905)
(4, 0.55311130191688773)
(5, 0.57735269466843531)
(6, 0.59245888983531869)
(7, 0.60298844507312377)
(8, 0.60984121832859317)
(9, 0.61562062494560699)
(10, 0.61962891106742513)
(11, 0.62316885297953151)
(12, 0.62620000156268618)
(13, 0.62838581670503701)
(14, 0.63018110116685611)
(15, 0.63214131922754002)
(16, 0.63376498895841238)
(17, 0.63515235991244856)
(18, 0.6366223926703366)
(19, 0.63783463201349488)
(20, 0.63915464194247917)
(21, 0.64013680763863623)
(22, 0.64118544679087386)
(23, 0.64219490773051147)
(24, 0.64315159168477809)
(25, 0.64393383771135948)
(26, 0.64475852058983363)
(27, 0.64562581471250358)
(28, 0.64639316150759119)
(29, 0.64711989909177936)
(30, 0.64776823320551113)
(31, 0.64827107926863292)
(32, 0.64879118461632079)
(33, 0.6492255379943801)
(34, 0.64963773573783457)
(35, 0.65016641593833657)
(36, 0.65063287233697242)
(37, 0.65100943745739548)
(38, 0.65144201473519014)
(39, 0.65175849970225974)
(40, 0.65207872140639855)
(41, 0.65230594462539415)
(42, 0.65255258307399455)
(43, 0.65271811280673253)
(44, 0.65281957953754366)
(45, 0.65289657166852422)
(46, 0.65280826296450112)
(47, 0.65266908515252009)
(48, 0.65250251877792953)
(49, 0.65217503733049387)
(50, 0.65182282534412472)
(51, 0.65145485504938905)
(52, 0.6510306517494493)
(53, 0.65046677327568037)
(54, 0.64975215063475311)
(55, 0.64901169872917119)
(56, 0.64819432140306321)
(57, 0.64730936229943226)
(58, 0.64649270916455348)
(59, 0.64555081833899919)
(60, 0.64455926978616085)
(61, 0.64345267883942026)
(62, 0.64224257093404513)
(63, 0.64093738455211247)
(64, 0.63964474115877623)
(65, 0.63833598395049085)
(66, 0.63698816837188144)
(67, 0.6355477801019368)
(68, 0.63409026073371433)
(69, 0.63258042382815904)
(70, 0.63106375892654354)
(71, 0.62947348181607454)
(72, 0.62787873989048337)
(73, 0.62629018435491257)
(74, 0.62459239391152876)
(75, 0.62296998487921873)
(76, 0.62135881993428699)
(77, 0.61965662084011341)
(78, 0.61797124941860859)
(79, 0.6162553700962301)
(80, 0.6145712527298649)
(81, 0.61286010513333156)
(82, 0.61111666955345156)
(83, 0.60936940480980051)
(84, 0.60761094153786321)
(85, 0.60588723766452079)
(86, 0.6041585878410215)
(87, 0.60239553484049868)
(88, 0.60060937788304902)
(89, 0.59887459825826295)
(90, 0.59710924512394647)
(91, 0.59536044552735279)
(92, 0.59360387925406843)
(93, 0.59184979104412994)
(94, 0.59009605186393632)
(95, 0.58837594993208353)
(96, 0.58668762325260837)
(97, 0.58501662474101646)
(98, 0.58329012573757522)
(99, 0.58156853606839232)
(100, 0.57989144455569075)
};\addlegendentry{$\textsc{NS}, s=3$};\label{line:sverse-ppr-neg-blog-3}
\addplot[very thick,color=cycle2,mark=oplus,mark size=3pt] coordinates {
(1, 0.78954615981380916)
(2, 0.69181007712444187)
(3, 0.67804752940101642)
(4, 0.67318338082709994)
(5, 0.67053657859758431)
(6, 0.66598010485516224)
(7, 0.66313346987500899)
(8, 0.66014062955834951)
(9, 0.65846609412477097)
(10, 0.65711571519243872)
(11, 0.65679114858532661)
(12, 0.65678549890859617)
(13, 0.65721499444027132)
(14, 0.65826909939051437)
(15, 0.65918785569011484)
(16, 0.66057846333175008)
(17, 0.66224325089651603)
(18, 0.66384390133612181)
(19, 0.66555835028580623)
(20, 0.66750282046181164)
(21, 0.66933978870389765)
(22, 0.67135830484260262)
(23, 0.67321742508956417)
(24, 0.67524480886821248)
(25, 0.67740942953779959)
(26, 0.67961231639963426)
(27, 0.68183135812935136)
(28, 0.68419989668246506)
(29, 0.68671802893835165)
(30, 0.68901069315991581)
(31, 0.6914005977354466)
(32, 0.69365754541933222)
(33, 0.69591033330242524)
(34, 0.69843465625045686)
(35, 0.70085351287552622)
(36, 0.70339621923989892)
(37, 0.70584011679783432)
(38, 0.70819765445594807)
(39, 0.71051425517048672)
(40, 0.71292179243111131)
(41, 0.71537543550496685)
(42, 0.71767941037218808)
(43, 0.7200911612014087)
(44, 0.72239585072823798)
(45, 0.72482978418048893)
(46, 0.72705526276477228)
(47, 0.72926141159752689)
(48, 0.73147565141533666)
(49, 0.73358136203103774)
(50, 0.7356711074070863)
(51, 0.73771196786250104)
(52, 0.73966554496358372)
(53, 0.74155061189194982)
(54, 0.74338631503838848)
(55, 0.74512410025576448)
(56, 0.74668173097594481)
(57, 0.74822298659016895)
(58, 0.74975069872830813)
(59, 0.75118207988175245)
(60, 0.75255146569784159)
(61, 0.75388154057297641)
(62, 0.75511680876790543)
(63, 0.75627121542691289)
(64, 0.75737267198852698)
(65, 0.75844929287284768)
(66, 0.75946786112065756)
(67, 0.76050822813995889)
(68, 0.76149904453835848)
(69, 0.76239883430283739)
(70, 0.76324834801124775)
(71, 0.7640570848795476)
(72, 0.76482478018191091)
(73, 0.76555445493127483)
(74, 0.76622924183311647)
(75, 0.76691213994974938)
(76, 0.76753871071268298)
(77, 0.76817496175883915)
(78, 0.76882634950103124)
(79, 0.76941395290305548)
(80, 0.77000874057614832)
(81, 0.77047314274504008)
(82, 0.77095587595724846)
(83, 0.77140297979740668)
(84, 0.77181378827378633)
(85, 0.77219186189069333)
(86, 0.77262167737904541)
(87, 0.77303532346968418)
(88, 0.77339306413496411)
(89, 0.77370637340587678)
(90, 0.77403817725619384)
(91, 0.77434216875679263)
(92, 0.77452566263283984)
(93, 0.77470962568990742)
(94, 0.77499897333739032)
(95, 0.77524352456356849)
(96, 0.77543925900621968)
(97, 0.77556725548464567)
(98, 0.77572943455121934)
(99, 0.77586159733492344)
(100, 0.77600468661255295)
};\addlegendentry{$\textsc{NCE}, s=3$};\label{line:sverse-ppr-blog-3}
\addplot[very thick,color=cycle5,mark=asterisk,mark size=3pt] coordinates {
(1, 0.94036074476338249)
(2, 0.71695379725074582)
(3, 0.70159471020849162)
(4, 0.71164873352220936)
(5, 0.7167713265540534)
(6, 0.71956248704316617)
(7, 0.72126431099427057)
(8, 0.72129429432262426)
(9, 0.72126891194288401)
(10, 0.72011318859188489)
(11, 0.71830121688249082)
(12, 0.71777780268752034)
(13, 0.71759599670866059)
(14, 0.71672729910634536)
(15, 0.71667933234299519)
(16, 0.71650867666911244)
(17, 0.71661815232927073)
(18, 0.71632294173424382)
(19, 0.71675307699781787)
(20, 0.71733629572926372)
(21, 0.71857546037037212)
(22, 0.71969742277226967)
(23, 0.72054401235799204)
(24, 0.72159505927428047)
(25, 0.72238000215697118)
(26, 0.72354975918170972)
(27, 0.72464311788107016)
(28, 0.72586707929118577)
(29, 0.72682859127589783)
(30, 0.72819190412334067)
(31, 0.72935765512086315)
(32, 0.73056528084006656)
(33, 0.73168196106563421)
(34, 0.73277195301962805)
(35, 0.73425304318022389)
(36, 0.73542878448316629)
(37, 0.7367531434519905)
(38, 0.73812692461379692)
(39, 0.73937333200038735)
(40, 0.74086476887237407)
(41, 0.74238055259586266)
(42, 0.74379579006938268)
(43, 0.74535974748979472)
(44, 0.74675123815405364)
(45, 0.74838225242336232)
(46, 0.74986669635161651)
(47, 0.75124967212678084)
(48, 0.7526396753115373)
(49, 0.75385431511607248)
(50, 0.75514270382060333)
(51, 0.75609455417250737)
(52, 0.75726819940646162)
(53, 0.75821040701693843)
(54, 0.75916045888733974)
(55, 0.75996969383572521)
(56, 0.7606333790584997)
(57, 0.76145276786482541)
(58, 0.7623106763812475)
(59, 0.76296675695172855)
(60, 0.7635008494726454)
(61, 0.76398834634735135)
(62, 0.76429899653260236)
(63, 0.76477390264706113)
(64, 0.76518567949290628)
(65, 0.76565171769687324)
(66, 0.76585756792768556)
(67, 0.76599607262297975)
(68, 0.76622758481697195)
(69, 0.7663531199453657)
(70, 0.76640128349046976)
(71, 0.76647302167889375)
(72, 0.76661935077994903)
(73, 0.76663906255448111)
(74, 0.76649150607107741)
(75, 0.76647869123237067)
(76, 0.76641605603066454)
(77, 0.76638793927775217)
(78, 0.76614516215138606)
(79, 0.76615985508849005)
(80, 0.76603080604631812)
(81, 0.76590631028260914)
(82, 0.76574665390479824)
(83, 0.76548788812793345)
(84, 0.76520911167880579)
(85, 0.76492779869310856)
(86, 0.76466464793951983)
(87, 0.76438643663727812)
(88, 0.76417090854673475)
(89, 0.76393453244469922)
(90, 0.76367031593537449)
(91, 0.76333123814869297)
(92, 0.76298469215505405)
(93, 0.76262128713104727)
(94, 0.76237398195646233)
(95, 0.7621062838711109)
(96, 0.76186301673946943)
(97, 0.76157835024487952)
(98, 0.76129042982365014)
(99, 0.76101903315862529)
(100, 0.76071300580369861)
};\addlegendentry{$\textsc{NCE}, s=100$};\label{line:sverse-ppr-blog-100}
\end{axis}
\end{tikzpicture} %
}%
\vspace{-3mm}
\caption{Ranking preformance in terms of NDCG for reconstructing PPR similarity, averaged across nodes in a graph.}\label{fig:scalability-multiplot}
\vspace{-2mm}
\end{figure}
\subsection{Complexity Comparison}\label{subsec:complexity}

Table~\ref{tbl:complexity} presents the average (\(\Theta{}\)) and worst-case (\(\bigO{}\)) time and space complexity of \verseemb, along with those of methods in previous works; $d$ is the embedding dimensionality, $n$ the number of nodes, $m$ the number of edges, and $s$ the number of samples used, and $t$ the number of iterations in \grarep.
Methods that rely on {fast} sampling (\verseemb{} and \lineemb) require time linear in $n$ and space quadratic in $n$ in the worst case. \deepwalk{} requires \(\bigO(n\log n)\) time due to its use of hierarchical softmax. \nodetovec{} stores the neighbors\hyp{}of\hyp{}a\hyp{}neighbor, incurring a quadratic cost in sparse graphs, but cubic in dense graphs. Thus, \verseemb{} comes at the low end of complexities compared to previous work on graph embeddings. 
{Remarkably, even the computationally expensive \fverseemb{} affords complexity comparable to some previous works.}
\begin{table}[!b]
\vspace{-2mm}
\begin{center}
{%
\newcolumntype{C}{>{\centering\arraybackslash}X}
\renewcommand{\aboverulesep}{0pt}
\renewcommand{\belowrulesep}{0pt}
\begin{tabularx}{\columnwidth}{XCCCC}
& \multicolumn{2}{c}{\textbf{Time}} & \multicolumn{2}{c}{\textbf{Space}} \\
\cmidrule(lr){2-3}\cmidrule(lr){4-5}
\emph{method} & \(\Theta{}\) & \(\bigO{}\) & \(\Theta{}\) & \(\bigO{}\) \\ 
\midrule
\deepwalk{}                  & \(dn\log n\) & \(dn\log n\) & \(m\) & \(n^2\) \\
\grarep{}                    & \(tn^3\) & \(tn^3\) & \(n^2\) & \(n^2\) \\
\lineemb{}                      & \(dsn\) & \(dsn\) & \(m\) & \(n^2\) \\
\nodetovec{}                  & \(dsn\) & \(dsn\) & \(\frac{m^2}{n}\) & \(n^3\) \\
\hoppe{}                      & \(d^2m\) & \(d^2n^2\) & \(m\) & \(n^2\) \\
\midrule
\rowcolor{cycle2!8}
\fverseemb{}               & \(dn^2\) & \(dn^2\) & \(n^2\) & \(n^2\) \\
\midrule
\rowcolor{cycle2!35}
\verseemb{}               & \(dsn\) & \(dsn\) & \(m\) & \(n^2\) \\
\bottomrule
\end{tabularx}
}
\end{center}
\caption{Comparison of neural embedding methods in terms of average (\(\Theta{}\)) and worst-case (\(\bigO{}\)) time and space complexity.}\label{tbl:complexity}
\vspace{-6mm}
\end{table} %

\subsection{Similarity Notions in Previous Approaches}\label{subsec:verse-relationships}

Here, we provide additional theoretical considerations of \verseemb{} compared to \lineemb~\cite{tang15}, \deepwalk~\cite{perozzi14} and \nodetovec~\cite{grover16} and demonstrate how our general model subsumes and extends previous research in versatility and scalability. 

\mpara{Comparison with \deepwalk{} and \nodetovec.}
\deepwalk{} and \nodetovec{} generate samples from random walks of fixed window size $w$ by the \wordtovec{} sampling strategy~\cite{mikolov13}. We derive a relationship between the window size $w$ of that strategy and the damping factor $\alpha$ of Personalized PageRank. 

\begin{lemma}
Let $X_r$ be the random variable that represents the length of a random walk $r$ sampled with parameter $w$ by the \wordtovec{} sampling strategy. Then for any $0 < j \leq w$

\begin{equation}\label{eq:dw-sim}
\Pr(X_r=j) = \frac{2}{w(w+1)}(w-j + 1) 
\end{equation}
\end{lemma}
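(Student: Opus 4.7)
The plan is to unpack the word2vec / SkipGram sampling strategy, which is implicit in the statement, and then do a simple counting calculation. Recall that in the word2vec sampling strategy used by DeepWalk and Node2vec, for each target position in a walk, a window size $K$ is drawn uniformly at random from $\{1, \dots, w\}$, and then all $2K$ context pairs at signed offsets $\pm 1, \pm 2, \dots, \pm K$ from the target are emitted as training pairs. The random variable $X_r$ should therefore be interpreted as the (absolute) distance between the endpoints of a training pair that the sampler actually emits; so I want the induced distribution on this distance.

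First I would compute, for a fixed offset $j \in \{1,\dots,w\}$, the expected number of emitted pairs at distance $j$ per target position. Since a pair at distance $j$ is emitted exactly when the sampled window $K$ satisfies $K \ge j$ (and then both the $+j$ and $-j$ copies appear), this expectation is
\[
\sum_{k=j}^{w} \Pr(K=k) \cdot 2 \;=\; \frac{2(w-j+1)}{w}.
\]
Second, I would compute the total expected number of emitted pairs per target position by summing over $k$:
\[
\sum_{k=1}^{w} \Pr(K=k)\cdot 2k \;=\; \frac{2}{w}\sum_{k=1}^{w} k \;=\; w+1.
\]

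Finally, the probability that a uniformly chosen emitted pair has distance $j$ is the ratio of these two quantities, giving
\[
\Pr(X_r = j) \;=\; \frac{2(w-j+1)/w}{w+1} \;=\; \frac{2}{w(w+1)}\,(w-j+1),
\]
which is exactly the claimed formula. As a sanity check one can verify $\sum_{j=1}^{w}\Pr(X_r=j)=1$ via the identity $\sum_{j=1}^{w}(w-j+1)=w(w+1)/2$.

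The main obstacle is not the arithmetic but pinning down the precise sampling convention, since the word2vec code's behavior (uniform draw of window size, then emission of all offsets within that window) is folklore rather than stated in the paper. Once that convention is fixed, the argument reduces to the two-line expectation computation above; some care is needed to track the factor of $2$ from the symmetry of left/right context, which is exactly what produces the leading $2$ in the final expression.
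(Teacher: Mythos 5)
Your proof is correct and follows essentially the same route as the paper's: both arguments reduce to the observation that the word2vec reduced-window convention emits a pair at offset $j$ with frequency proportional to $(w-j+1)$, normalized by the total count $\sum_{i=1}^{w} i = w(w+1)/2$ per side. The only cosmetic difference is that you phrase the convention as a uniformly random window size $K$ and take a ratio of expectations, whereas the paper enumerates the context sizes $1,\dots,w$ deterministically and counts samples in a single walk; the arithmetic is identical.
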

\begin{proof}

For each node $v \in V$, \wordtovec{} strategy samples two random walks of length $w$ starting from $v \in V$. These two random walks represents the {\em context\/} of $v$, where $v$ is the central node of a walk of length $2w + 1$. 
The model is then trained on increasing context size up to $w$. 
Therefore, the number of nodes sampled for each random walk amount to $\sum_{i=1}^w i= \frac{w(w+1)}{2}$. A node at distance $0 < j \leq w$ is sampled $(w - j + 1)$ times; thus, the final probability is $\frac{2}{w(w+1)}(w-j + 1)$.
\end{proof}

Personalized PageRank provides the maximum likelihood estimation for the distribution in Equation~\ref{eq:dw-sim} for $\alpha = \frac{w-1}{w+1}$. Then, $w = 10$ corresponds to $\alpha = 0.82$, which is close to the standard $\alpha = 0.85$, proved effective in practice~\cite{brin1998anatomy}. 
On the other hand, $\alpha = 0.95$, which{, for example, achieves the best performance on a task} in Section~\ref{subsec:classification}, corresponds to $w = 39$. Such large $w$ prohibitively increases the computation time for \deepwalk{} and \nodetovec.

\mpara{Comparison with \lineemb.}
\lineemb{} introduces the concept of first- and second-order proximities to model complex node relationships. As we discussed, in \verseemb, first-order proximity corresponds to the dot-product among the similarity vectors in the embedding space:

\begin{equation*}
        \sime(u,v) = W_u \cdot W_v
\end{equation*}
\vspace{1mm}

On the other hand, second-order proximity corresponds to letting \verseemb{} learn one more matrix $W'$, so as to model {\em asymmetric\/} similarities of nodes in the embedding space. We do that by defining $\sime$ asymmetrically, using both $W$ and $W'$:

\begin{equation*}
        \sime(u,v) = W_u \cdot W'_v
\end{equation*}
\vspace{1mm}

The intuition behind second-order proximity is the same as that of SimRank: similar nodes have similar neighborhoods.
Every previous method, except for \lineemb-1, used second-order proximities, due to the \wordtovec{} interpretation of embeddings borrowed by \deepwalk{} and \nodetovec. In our model, second-order proximities can be encoded %
by adding an additional matrix; we empirically evaluate their effectiveness in Section~\ref{sec:experiments}.

\section{Experiments}\label{sec:experiments}

We evaluate \verseemb{} against several state-of-the-art graph embedding algorithms.
{For repeatability purposes, we provide all data sets and the C++ source code for \verseemb{}\footnote{\url{https://github.com/xgfs/verse}}, \deepwalk{}\footnote{\url{https://github.com/xgfs/deepwalk-c}} and \nodetovec{}\footnote{\url{https://github.com/xgfs/node2vec-c}}.}
We run the experiments on an Amazon AWS c4.8 instance with 60Gb RAM\@.
Each method is assessed on the best possible parameters, with early termination of the computation in case no result is returned within one day.
We provide the following state-of-the-art graph embedding methods for comparison:

\begin{itemize}[leftmargin=0cm,itemindent=.4cm,labelwidth=\itemindent,labelsep=0cm,align=left]
	\item \deepwalk~\cite{perozzi14}: This approach learns an embedding by sampling random walks from each node, applying \wordtovec-based learning on those walks. 
    We use the default parameters described in the paper, i.e., walk length \(t=80\), number of walks per node \(\gamma = 80\), and window size \(w=10\).
    \item \lineemb~\cite{tang15}: This approach learns a \(d\)-dimensional embedding in two steps {, both using adjacency similarity.
    First, it learns \(d/2\) dimensions {using} first-order proximity; then, it learns another \(d/2\) features {using} second-order proximity. }
    Last, the two halves are normalized and concatenated.
    We obtained a copy of the code\footnote{\url{https://github.com/tangjianpku/LINE}} and run experiments with total \(T=10^{10}\) samples and \(s=5\) negative samples, as described in the paper. 
    \item \grarep{}~\cite{cao15}: This method factorizes the full adjacency similarity matrix using SVD, multiplies the matrix by itself, and repeats the process $t$ times. The final embedding is obtained by concatenating the factorized vectors. We use $t=4$ and $32$ dimensions for each SVD factorization; thus, the final embedding has $d=128$.
    \item \hoppe~\cite{ou2016}: This method is a revised Singular Value Decomposition restricted to sparse similarity matrices. We report the results obtained running \hoppe{} with the default parameters, i.e, Katz similarity {(an extension of Katz centrality \cite{katz53})} {as the} similarity measure and {\(\beta{}\) inversely proportional to the spectral radius}. Since Katz similarity {does not converge} on directed graphs with sink nodes, we used Personalized PageRank with \(\alpha{} = 0.85\) for {the} \cocitation{} dataset.
    \item \nodetovec~\cite{grover16}: This is a {\em hyperparameter-supervised\/} approach that extends \deepwalk{} by adding two parameters, \(p\) and \(q\), so as to control \deepwalk's random walk sampling. %
    The special case with parameters \(p=1, q=1\) corresponds to \deepwalk{}; yet, sometimes \nodetovec{} shows worse performance than \deepwalk{} in our evaluation, due to the fact it uses negative sampling, while \deepwalk{} uses hierarchical softmax. %
    We fine-tuned the hyperparameters $p$ and $q$ on each dataset and task. Moreover, we used a large training data to fairly compare to \deepwalk, i.e., walk length \(l=80\), number of walks per node \(r = 80\), and window size \(w=10\).
\end{itemize}

\mpara{Baselines.} In addition to graph embeddings methods, we implemented the following baselines. 

\begin{itemize}[leftmargin=0cm,itemindent=.4cm,labelwidth=\itemindent,labelsep=0cm,align=left]
	\item Logistic regression: We use the well-known logistic regression method as a baseline for link prediction. 
    We train the model on a set of common node-specific features, namely node degree, number of common neighbors, Adamic-Adar, Jaccard coefficient, preferential attachment, and resource allocation index~\cite{lu2011,lichtenwalter2010}.
	\item Louvain community detection~\cite{blondel2008}: We employ a standard partition method for community detection as a baseline for graph clustering, reporting the best partition in terms of modularity~\cite{newman2006}.
\end{itemize}

\begin{table}
\setlength{\tabcolsep}{3.5pt}
\centering
{\small
\begin{tabular}{lcrcccr}
\multicolumn{1}{c}{} & \multicolumn{3}{c}{\textbf{Size}} & \multicolumn{3}{c}{\textbf{Statistics}} \\
\cmidrule(lr){2-4}\cmidrule(lr){5-7}
\emph{dataset} & \(|V|\) & \multicolumn{1}{c}{\(|E|\)} & \(|\mathcal{L}|\) & \mbox{Avg.~degree} & Mod. & Density \\
    \midrule
\dblogcatalog{} & 10k & 334k & 39 & 64.8 & 0.24 & \mbox{\(6.3 \times 10^{-3}\)} \\
\mbox{\cocitation{}} & 44k & 195k & 15 & 8.86 & 0.72 & \mbox{\(2.0 \times 10^{-4}\)} \\
\mbox{\coauthor{}} & 52k & 178k & --- & 6.94 & 0.84 & \mbox{\(1.3 \times 10^{-4}\)} \\
\dvk{} & 79k & 2.7M & 2 & 34.1 & 0.47 & \mbox{\(8.7 \times 10^{-4}\)} \\
\dyoutube{} & 1.1M & 3M & 47 & 5.25 & 0.71 & \mbox{\(9.2 \times 10^{-6}\)} \\ 
\dorkut{} & 3.1M & 234M & 50 & 70 & 0.68 & $2.4 \times 10^{-5}$ \\ 
\bottomrule
\end{tabular}
}
\caption{Dataset characteristics: number of vertices \(|V|\), number of edges \(|E|\); number of node labels \(|\mathcal{L}|\); average node degree; modularity~\protect\cite{newman2006}; density defined as \(|E|/\binom{|V|}{2}\).}\label{tbl:datasets}
\vspace{-23pt}
\end{table}

\mpara{Parameter settings.}
In line with previous research~\cite{perozzi14, tang15, grover16} we set the embedding dimensionality \(d\) to \(128\).
The learning procedure (Algorithm~\ref{alg:verse}, Line~\ref{alg:verse-loop}) is run \(10^5\) times for \verseemb{} and \(250\) times for \fverseemb{}; the difference in setting is motivated by the number of model updates which is \(\bigO(n)\) in \verseemb{} and \(\bigO(n^2)\) in \fverseemb.

We use LIBLINEAR~\cite{fan2008} to perform logistic regression with default parameter settings. 
Unlike previous work~\cite{perozzi14,tang15,grover16} we employ a stricter assumption for multi-label node classification: the number of correct classes is not known apriori, but found through the Label Powerset multi-label classification approach~\cite{tsoumakas2006}.

For link prediction and multi-label classification, we evaluated each individual embedding \(10\) times in order to reduce the noise introduced by the classifier. 
Unless otherwise stated, we run each experiment \(10\) times, and report the average value among the runs. 
Throughout our experimental study, we use the above parameters as default, unless indicated otherwise.

\mpara{Datasets.}
We test our methods on {six} real datasets; we report the main data characteristics in Table~\ref{tbl:datasets}.

\begin{itemize}[leftmargin=0cm,itemindent=.4cm,labelwidth=\itemindent,labelsep=0cm,align=left]
    \item {\dblogcatalog~\cite{zafarani2009} is a network of social interactions among bloggers in the \dblogcatalog{} website. Node-labels represent topic categories provided by authors.}
    \item Microsoft Academic Graph~\cite{MSAccGraph} is a network of academic papers, citations, authors, and affiliations from Microsoft Academic website released for the KDD-2016 cup. It contains 150 million papers up to February 2016 spanning various disciplines from math to biology. We extracted two separate subgraphs from the original network, using \(15\) conferences in data mining, databases, and machine learning. The first, \coauthor, is a co-authorship network among authors. The second, \cocitation, is a network of papers citing other papers; labels represent conferences in which papers were published.  
        \item \dvk{} is a Russian all-encompassing social network. We extracted two snapshots of the network in November 2016 and May 2017 to obtain information about link appearance. We use the gender of the user for classification and country for clustering.
    \item \dyoutube~\cite{tang2009} is a network of social interactions among users of the \dyoutube{} video platform. The labels represent groups of viewers by video genres.
    \item \dorkut~\cite{yang2015} is a network of social interactions among users of the \dorkut{} social network platform. The labels represent communities of users. We extracted the 50 biggest communities and use them as labels for classification.
\end{itemize}

\mpara{Evaluation methodology.} %
The default form of \verseemb{} runs Personalized PageRank with $\alpha = 0.85$.
For the sake of fairness, we design a {\em hyperparameter-supervised\/} variant of \verseemb, by analogy to the hyperparameter-tuned variant of \deepwalk{} introduced by \nodetovec~\cite{grover16}. This variant, \hypverseemb, selects the best similarity with cross-validation across two proximity orders (as discussed in Section~\ref{subsec:verse-relationships}) and three similarities (Section~\ref{subsec:ppr-instantiation}) with $\alpha{}{\in}\{0.45, 0.55, 0.65, 0.75, 0.85, 0.95\}$ for $\simppr$ and $C\in{}\{0.15, 0.25,$ $0.35,$ $0.45, 0.55, 0.65\}$  for $\simsr$.
\begin{table}[!ht]
\centering
\begin{tabularx}{\columnwidth}{XX}
\toprule
Operator & Result \\
\midrule
\textsf{Average} & \((\mathbf{a} + \mathbf{b})/2\) \\
\textsf{Concat} & \([\mathbf{a}_1, \ldots, \mathbf{a}_d, \mathbf{b}_1, \ldots, \mathbf{b}_d]\) \\
\textsf{Hadamard} & \([\mathbf{a}_1 * \mathbf{b}_1, \ldots, \mathbf{a}_d * \mathbf{b}_d]\) \\
\textsf{Weighted L1} & \([|\mathbf{a}_1 - \mathbf{b}_1|, \ldots, |\mathbf{a}_d - \mathbf{b}_d|]\) \\
\textsf{Weighted L2} & \([{(\mathbf{a}_1 - \mathbf{b}_1)}^2, \ldots, {(\mathbf{a}_d - \mathbf{b}_d)}^2] \) \\
\bottomrule
\end{tabularx}
\caption{Vector operators used for link-prediction task for each \(u,v \in V\) and corresponding embeddings \(\mathbf{a}, \mathbf{b} \in \mathbb{R}^d\).}\label{tab:lp-operators}
\vspace{-15pt}
\end{table} %
\begin{table}[!ht]
\setlength{\tabcolsep}{1pt}
\renewcommand{\aboverulesep}{0pt}
\renewcommand{\belowrulesep}{0pt}

\begin{center}
\newcolumntype{C}{>{\centering\arraybackslash}X}
\begin{tabularx}{\columnwidth}{XCCCCC}
\multicolumn{1}{l}{\emph{method}} & \textsf{Average} & \textsf{Concat} & \textsf{Hadamard} & \textsf{L1} & \textsf{L2} \\
\midrule
\small\bf\fverseemb{} & 80.06 & 79.69 & \cellcolor{cycle3!20}\underline{86.71} & 84.49 & 84.97 \\
\small\bf\verseemb{} & 79.16 & 78.79 & \cellcolor{cycle3!20}\underline{85.69} & 71.93 & 72.11 \\
\small\deepwalk{} & 68.43 & 68.06 & 66.54 & \underline{79.06} & 78.11 \\
\small\grarep{} & 74.87 & 74.91 & \underline{82.24} & 80.03 & 80.05 \\
\small\lineemb{} & 77.49 & 77.39 & \underline{77.73} & 70.55 & 71.83 \\
\small\hoppe{} & \underline{74.90} & 74.83 & 74.81 & 74.34 & 74.81 \\
\midrule
\small\bf\hypverseemb{} & 79.52 & 79.10 & \cellcolor{cycle3!20}\underline{86.15} & 76.45 & 76.72 \\ %
\small\nodetovec{} & 77.07 & 76.67 & 79.42 & \underline{81.25} & 80.85 \\
\midrule
Feature~Eng. & \multicolumn{5}{c}{77.53} \\
\bottomrule
\end{tabularx}
\end{center}
\caption{Link prediction results on the \coauthor{} coauthorship graph. Best results per method are underlined.}\label{tab:link-prediction-aco}
\vspace{-15pt}
\end{table} %
\begin{table}[!ht]
\begin{center}
\setlength{\tabcolsep}{1pt}
\renewcommand{\aboverulesep}{0pt}
\renewcommand{\belowrulesep}{0pt}
\newcolumntype{C}{>{\centering\arraybackslash}X}
\begin{tabularx}{\columnwidth}{XCCCCC}
\multicolumn{1}{l}{\emph{method}} & \textsf{Average} & \textsf{Concat} & \textsf{Hadamard} & \textsf{L1} & \textsf{L2} \\
\midrule
\small\bf\fverseemb{} & 74.94 & 74.81 & \cellcolor{cycle3!20}\underline{80.77} &  78.49 &  79.13 \\
\small\bf\verseemb{} & 73.78 & 73.66 & \cellcolor{cycle3!20}\underline{79.71} & 74.11 & 74.56 \\
\small\deepwalk{} & 70.05 & 69.92 & 69.79 & \underline{78.38} & 77.37 \\
\small\lineemb{} & \underline{75.17} &  75.13 & 72.54 & 63.77 & 64.47 \\
\small\hoppe{} & 71.89 & \underline{71.90} & 70.22 & 71.22 & 70.63 \\
\midrule
\small\bf\hypverseemb{} & 74.14 & 74.02 & \cellcolor{cycle3!20}\underline{80.26} & 73.04 & 73.53 \\
\small\nodetovec{} & 71.29 & 71.22 & 72.43 & 78.38 & \underline{78.66} \\
\midrule
Feature~Eng. & \multicolumn{5}{c}{78.84}\\
\bottomrule
\end{tabularx}
\end{center}
\caption{Link prediction results on the \dvk{} social graph. Best results per method are underlined.}\label{tab:link-prediction-vk}
\vspace{-15pt}
\end{table} %
\begin{table}[h]
\begin{center}
\setlength{\tabcolsep}{1pt}
\renewcommand{\aboverulesep}{0pt}
\renewcommand{\belowrulesep}{0pt}

\newcolumntype{C}{>{\centering\arraybackslash}X}
\begin{tabularx}{\columnwidth}{XCCCCC}
\multicolumn{1}{c}{} & \multicolumn{5}{c}{\textit{labelled nodes, \%}}\\
\cmidrule{2-6}
\multicolumn{1}{l}{\emph{method}} & 1\% & 3\% & 5\% & 7\% & 9\% \\
\midrule
\small\bf\fverseemb{} & 27.52 & \cellcolor{cycle3!20}29.83 & \cellcolor{cycle3!20}31.01 & \cellcolor{cycle3!20}31.68 & \cellcolor{cycle3!20}32.24 \\
\small\bf\verseemb{} & 27.32 & 29.42 & 30.67 & 31.32 & 31.83 \\
\small\deepwalk{} & 26.81 & 29.27 & 30.37 & 31.04 & 31.43 \\
\small\grarep{} & \cellcolor{cycle3!20}27.68 & 29.21 & 30.24 & 30.23 & 30.79 \\
\small\lineemb{} & 23.68 & 26.90 & 27.89 & 28.49 & 28.80 \\
\small\hoppe{} & 22.81 & 26.63 & 27.59 & 28.19 & 28.58 \\
\midrule
\small\bf\hypverseemb{} & 27.46 & 29.45 & 30.67 & 31.38 & 31.92 \\
\small\nodetovec{} & 27.45 & 29.66 & 30.82 & 31.54 & 32.04 \\
\bottomrule
\end{tabularx}
\end{center}
\caption{Multi-class classification results in \cocitation{} dataset.}\label{tab:classification-ak}
\vspace{-20pt}
\end{table} 
\subsection{Link Prediction}\label{subsec:link-perdiction}

Link prediction is the task of anticipating the appearance of a link between two nodes in a network. 
Conventional measures for link prediction include Adamic-Adar, Preferential attachment, Katz, and Jaccard coefficient. 
We train a Logistic regression classifier on edge-wise features obtained with the methods shown in Table~\ref{tab:lp-operators}.
For instance, for a pair of nodes \(u,v\), the \textsf{Concat} operator returns a vector as the sequential concatenation of the embeddings \(f(u)\) and \(f(v)\). 
On the \coauthor{} data, we predict new links for 2015 and 2016 co-authorships, using the network until 2014 for training; on \dvk, we predict whether a new friendship link appears between November 2016 and May 2017, using 50\% of the new links for training and 50\% for testing. We train the binary classifier by sampling non-existing edges as negative examples. Tables~\ref{tab:link-prediction-aco} and~\ref{tab:link-prediction-vk} report the attained accuracy. As a baseline, we use a logistic regression classifier trained on the respective data sets' features.

\verseemb{} with Hadamard product of vectors is consistently the best edge representation. We attribute this quality to the explicit reconstruction we achieve using noise contrastive estimation. \verseemb{} consistently outperforms the baseline in the tested datasets.
Besides, the hyperparameter-supervised \hypverseemb{} variant outruns \nodetovec{} on all datasets.

\subsection{Node Classification}\label{subsec:classification}

We now conduct an extensive evaluation on classification and report results for all the methods, where possible, with the \cocitation{}, \dvk{}, \dyoutube{}, and \dorkut{} graphs.
Node classification aims to predict of the correct node labels in a graph, as described previously in this section.

We evaluate accuracy by the Micro-F1 and Macro-F1 percentage measures. We report only Macro-F1, since we experience similar behaviors with Micro-F1.
For each dataset we conduct multiple experiments, selecting a random sample of nodes for training and leaving the remaining nodes for testing.
The results for four datasets, shown in Tables~\ref{tab:classification-ak}-\ref{tab:classification-ork}, exhibit similar trends: \verseemb{} yields predictions comparable or superior to those of the other contestants, while it scales to large networks such as~\dorkut. 
\lineemb{} outperforms \verseemb{} only in \dvk, where the gender of users is better captured using the direct neighborhood. The hyperparameter-supervised variant, \hypverseemb, is on a par with \nodetovec{} in terms of quality on \cocitation{} and \dvk; on the largest datasets \dyoutube{} and \dorkut, \hypverseemb{} keeps outperforming unsupervised alternatives, while \nodetovec{} depletes the memory. 

\begin{table}[ht!]
\begin{center}
\setlength{\tabcolsep}{1pt}
\renewcommand{\aboverulesep}{0pt}
\renewcommand{\belowrulesep}{0pt}
\newcolumntype{C}{>{\centering\arraybackslash}X}
\begin{tabularx}{\columnwidth}{XCCCCC}
\multicolumn{1}{c}{} & \multicolumn{5}{c}{\textit{labelled nodes, \%}}\\
\cmidrule{2-6}
\multicolumn{1}{l}{\emph{method}} & 1\% & 3\% & 5\% & 7\% & 9\% \\
\midrule
\small\bf\fverseemb{} & 58.32 & 61.01 & 61.74 & 62.26 & 62.50 \\
\small\bf\verseemb{} & 57.89 & 60.53 & 61.43 & 61.86 & 62.13 \\
\small\deepwalk{} & 58.22 & 60.93 & 61.79 & 62.17 & 62.49 \\
\small\lineemb{} & \cellcolor{cycle3!20}60.39 & \cellcolor{cycle3!20}62.83 & \cellcolor{cycle3!20}63.58 & \cellcolor{cycle3!20}64.01 & \cellcolor{cycle3!20}64.23 \\
\small\hoppe{} & 54.88 & 56.65 & 57.04 & 57.40 & 57.68 \\
\midrule
\small\bf\hypverseemb{} & 58.87 & 61.67 & 62.50 & 62.97 & 63.16 \\
\small\nodetovec{} & 58.85 & 61.79 & 62.62 & 63.04 & 63.30 \\
\bottomrule
\end{tabularx}
\end{center}
\caption{Multi-class classification results in \dvk{} dataset.}\label{tab:classification-vk}
\vspace{-15pt}
\end{table} %
\begin{table}[ht]
\begin{center}
\setlength{\tabcolsep}{1pt}
\renewcommand{\aboverulesep}{0pt}
\renewcommand{\belowrulesep}{0pt}
\newcolumntype{C}{>{\centering\arraybackslash}X}
\begin{tabularx}{\columnwidth}{XCCCCC}
\multicolumn{1}{c}{} & \multicolumn{5}{c}{\textit{labelled nodes, \%}}\\
\cmidrule{2-6}
\multicolumn{1}{l}{\emph{method}} & 1\% & 3\% & 5\% & 7\% & 9\% \\
\midrule
\small\bf\verseemb{} & 17.92 & 22.26 & 24.07 & 25.07 & 25.99 \\
\small\deepwalk{} & \cellcolor{cycle3!20}18.16 & 21.55 & 22.89 & 23.64 & 24.54 \\
\small\lineemb{} & 13.71 & 17.36 & 18.69 & 19.84 & 20.64 \\
\small\hoppe{} & 9.22 & 13.80 & 15.09 & 16.18 & 16.78 \\
\midrule
\small\bf\hypverseemb{} & \cellcolor{cycle3!20}18.16 & \cellcolor{cycle3!20}22.84 & \cellcolor{cycle3!20}25.40 & \cellcolor{cycle3!20}27.38 & \cellcolor{cycle3!20}29.09 \\
\bottomrule
\end{tabularx}
\end{center}
\caption{Multi-label classification results in \dyoutube{} dataset.}\label{tab:classification-yt}
\vspace{-15pt}
\end{table} %
\begin{table}[ht]
\begin{center}
\setlength{\tabcolsep}{1pt}
\renewcommand{\aboverulesep}{0pt}
\renewcommand{\belowrulesep}{0pt}
\newcolumntype{C}{>{\centering\arraybackslash}X}
\begin{tabularx}{\columnwidth}{XCCCCC}
\multicolumn{1}{c}{} & \multicolumn{5}{c}{\textit{labelled nodes, \%}}\\
\cmidrule{2-6}
\multicolumn{1}{l}{\emph{method}} & 1\% & 3\% & 5\% & 7\% & 9\% \\
\midrule
\small\bf\verseemb{} & 25.16 & 28.22 & 29.60 & 31.46 & 32.63 \\
\small\deepwalk{} & 24.21 & 27.99 & 29.63 & 30.60 & 31.27 \\
\small\lineemb{} & 26.79 & \cellcolor{cycle3!20}30.89 & 32.34 & 32.92 & 33.65 \\
\midrule
\small\bf\hypverseemb{} & \cellcolor{cycle3!20}27.73 & 30.70 & \cellcolor{cycle3!20}32.73 & \cellcolor{cycle3!20}34.00 & \cellcolor{cycle3!20}35.20 \\
\bottomrule
\end{tabularx}
\end{center}
\caption{Multi-class classification results in \dorkut{} dataset.}\label{tab:classification-ork}
\vspace{-15pt}
\end{table} 
\subsection{Node Clustering}\label{subsec:clustering}

Graph clustering detects groups of nodes with similar characteristics~\cite{newman2006,blondel2008}. We assess the embedding methods, using the \(k\)-means algorithm with \(k\)-means\texttt{++} initialization~\cite{arthur2007} to cluster the embedded points in a \(d\)-dimensional space.
Table~\ref{tab:clustering-nmi} reports the Normalized Mutual Information (NMI) with respect to the original label distribution. On \coauthor, \verseemb{} has comparable performance with \deepwalk; yet on \dvk, \verseemb{} outperforms all other methods.

We also assess graph embeddings on their ability to capture the graph community structure. 
We apply \(k\)-means with  different \(k\) values between $2$ and $50$ and select the best modularity~\cite{newman2006} score. 
Table~\ref{tab:modularity} presents our results, along with the modularity obtained by the Louvain method, the state-of-the-art modularity maximization algorithm~\cite{blondel2008}. 
\verseemb{} variants produce result almost equal that those of Louvain, outperforming previous methods, while the three methods that could manage the \dorkut{} data perform similarly.
\begin{table}[!ht]
\begin{center}
\setlength{\tabcolsep}{1pt}
\renewcommand{\aboverulesep}{0pt}
\renewcommand{\belowrulesep}{0pt}
\newcolumntype{C}{>{\centering\arraybackslash}X}
\begin{tabularx}{.65\columnwidth}{XCC}
\multicolumn{1}{l}{\emph{method}} &  \multicolumn{1}{c}{\cocitation} & \multicolumn{1}{c}{\dvk} \\
\midrule
\bf\fverseemb{} & 33.22 & \cellcolor{cycle3!20}9.24 \\
\bf\verseemb{} & 32.93 & 7.62 \\
\deepwalk{} & \cellcolor{cycle3!20}34.33 & 7.59 \\
\lineemb{} & 18.79 & 7.49 \\
\grarep{} & 27.43 & --- \\
\hoppe{} & 19.05 & 6.47 \\
\midrule
\bf\hypverseemb{} & 33.24 & 8.77 \\
\nodetovec{} & 32.84 & 8.05 \\
\midrule
Louvain & 30.73 & 4.54 \\
\bottomrule
\end{tabularx}
\end{center}
\caption{Node clustering results in terms of NMI.}\label{tab:clustering-nmi}
\vspace{-14pt}
\end{table} %
\begin{table}[!ht]
\begin{center}
\setlength{\tabcolsep}{1pt}
\renewcommand{\aboverulesep}{0pt}
\renewcommand{\belowrulesep}{0pt}
\newcolumntype{C}{>{\centering\arraybackslash}X}
\begin{tabularx}{\columnwidth}{XCCCCC}
\multicolumn{1}{l}{\emph{method}} & \cocitation{} & \coauthor{} & \dvk{} & \dyoutube{} & \dorkut{} \\
\midrule
\small\bf\fverseemb{} & \cellcolor{cycle3!20}70.12 & \cellcolor{cycle3!20}80.95 & 44.59 & --- & --- \\
\small\bf\verseemb{} & 69.43 & 79.25 & 45.78 & 67.63 & 42.64 \\
\small\deepwalk{} & 70.04 & 73.83 & 43.30 & 58.08 & \cellcolor{cycle3!20}44.66 \\
\small\lineemb{} & 60.02 & 71.58 & 39.65 & 63.40 & 42.59 \\
\small\grarep{} & 67.61 & 77.40 & --- & --- & --- \\
\small\hoppe{} & 42.45 & 69.57 & 21.70 & 37.94 & --- \\
\midrule
\small\bf\hypverseemb{} & 69.81 & 79.31 & \cellcolor{cycle3!20}45.84 & \cellcolor{cycle3!20}69.13 & --- \\
\small\nodetovec{} & 70.06 & 75.78 & 44.27 & --- & --- \\
\midrule
Louvain & 72.05 & 84.29 & 46.60 & 71.06 & --- \\
\bottomrule
\end{tabularx}
\end{center}
\caption{Node clustering results in terms of modularity.}\label{tab:modularity}
\vspace{-14pt}
\end{table} 
\subsection{Graph Reconstruction}\label{subsec:graph-resonstruction}

Good graph embeddings should preserve the graph structure in the embedding space. We evaluate the performance of our method on reconstructing the graph's adjacency matrix. Since each adjacent node should be close in the embedding space, we first sort any node other than the one considered by decreasing cosine distance among the vectors. Afterwards, we take a number of nodes equal to the actual degree of the node in the graph and connect to the considered node to create the graph structure.

Table~\ref{tab:graphrec} reports the relative accuracy measured as the number of correct nodes in the neighborhood of a node in the embedding space.
Again, \verseemb{} performs comparably well; its exhaustive variant, \fverseemb, which harnesses the full similarity does even better; however, the top performer is \hypverseemb{}, which achieves the obtained result when instantiated to the Adjacency Similarity. This result is unsurprising, given that the adjacency similarity measure tailors \hypverseemb{} for the task of graph reconstruction. 
\begin{table}[ht]
\begin{center}
\setlength{\tabcolsep}{1pt}
\renewcommand{\aboverulesep}{0pt}
\renewcommand{\belowrulesep}{0pt}
\newcolumntype{C}{>{\centering\arraybackslash}X}
\begin{tabularx}{\columnwidth}{XCCCCC}
\multicolumn{1}{l}{\emph{method}} & \cocitation{} & \coauthor{} & \dvk{} & \dyoutube{} & \dorkut{} \\
\midrule
\small\bf\fverseemb{} & 88.96 & 98.20 & 66.45 & --- & --- \\
\small\bf\verseemb{} & 58.73 & 74.30 & 50.18 & 28.64 & 18.39 \\
\small\deepwalk{} & 51.54 & 68.44 & 43.04 & 32.21 & 19.75 \\
\small\lineemb{} & 23.32 & 62.01 & 42.80 & 17.76 & 10.82 \\
\small\grarep{} & 67.61 & 77.40 & --- & --- & --- \\
\small\hoppe{} & 25.88 & 49.70 & 12.01 & 33.42 & --- \\
\midrule
\small\bf\hypverseemb{} & \cellcolor{cycle3!20}97.53 & \cellcolor{cycle3!20}98.91 & \cellcolor{cycle3!20}78.38 & \cellcolor{cycle3!20}38.34 & \cellcolor{cycle3!20}28.81 \\ %
\small\nodetovec{} & 66.35 & 72.70 & 53.70 & --- & --- \\
\bottomrule
\end{tabularx}
\end{center}
\caption{Graph reconstruction \% for all datasets.}\label{tab:graphrec}
\vspace{-14pt}
\end{table} 

\subsection{Parameter Sensitivity}\label{subsec:parameter-study}

We also evaluate the sensitivity of \verseemb{} to parameter choice.
{Figures~\ref{sfig:dimensionality},\ref{sfig:alpha} depict node classification performance in terms of Micro-F1 on {the} \dblogcatalog dataset, with 10\% of nodes labeled.}

The dimensionality \(d\) determines the size of the embedding, and hence the possibility to compute more fine-grained representations. 
The performance grows linearly as the number of dimensions approaches \(128\), while with larger \(d\) we observe no further improvement. 
Sampled \verseemb{} instead, performs comparably better than \fverseemb{} in low dimensional spaces, but degrades as \(d\) becomes larger than \(128\); this behavior reflects a characteristic of node sampling that tends to preserve similarities of close neighborhoods in low-dimensional embeddings, while the \verseemb{} leverages the entire graph structure for larger dimensionality

The last parameter we study is the damping factor \(\alpha{}\) which amounts to the inverse of the probability of restarting random walks from the initial node.
As shown in Figure~\ref{sfig:alpha}, the quality of {classification accurary} is quite robust with respect to \(\alpha{}\) {for both \verseemb and \fverseemb}, only compromised by extreme values. An \(\alpha{}\) value close to \(0\) reduces PPR to an exploration of the immediate neighborhood of the node. 
On the other hand, a value close to \(1\) amounts to regular PageRank, deeming all nodes as equally important. 
This result vindicates our work and distinguishes it from previous methods based on local neighborhood expansion.

\subsection{Scalability}\label{subsec:scalability}

We now present runtime results on synthetic graphs of growing size, generated by the Watts Strogatz model~\cite{watts1998}, setting \verseemb{} against scalable methods with C\texttt{++} implementations, namely \deepwalk{}, \lineemb, and \nodetovec{}. For each method, we report the total wall-clock time, with graph loading and necessary preprocessing steps included. We used \lineemb{}-2 time for fair comparison. As Figure~\ref{fig:new-multiplot} shows, \verseemb{} is comfortably the most efficient and scalable method, processing \(10^6\) nodes in about 3 hours, while \deepwalk{} and \lineemb{} take from 6 to 15 hours.
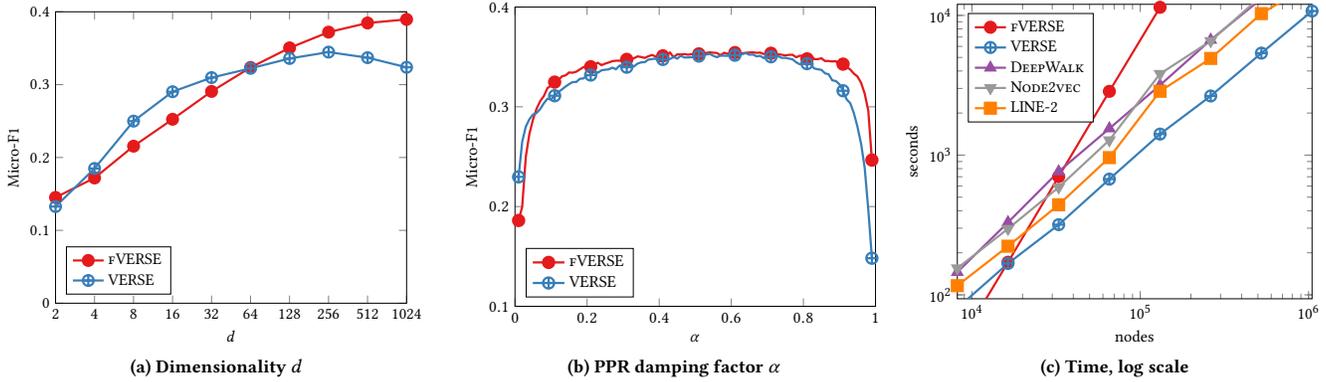
\begin{figure*}[ht!]
\centering
\subfloat[Dimensionality \(d\)]{
\resizebox{0.32\textwidth}{!}{%
\begin{tikzpicture}
        \begin{axis}[
    	xmode=log,
        log basis x={2},
        xlabel=\(d\),
        ylabel=$\text{Micro-F1}$,
        xmin=2,
        xmax=1024,
        xticklabel={\pgfmathparse{2^\tick}\pgfmathprintnumber[fixed]{\pgfmathresult}},
        x tick label style={/pgf/number format/.cd,%
          scaled x ticks = false,
          set thousands separator={},
          fixed
        },
        ymin=0,
        ymax=0.4,
        legend cell align=left,
        legend pos=south west
]
\addplot[very thick,color=cycle1,mark=*,mark size=3pt] coordinates {
(2, 0.14518429008456057)
(4, 0.17191909323160551)
(8, 0.21554629543396323)
(16, 0.25246253004659508)
(32, 0.29070114490914462)
(64, 0.32359304463467348)
(128, 0.3504865787429266)
(256, 0.37207863786289935)
(512, 0.38462954962445139)
(1024, 0.38953561303366208)
};\addlegendentry{\fverseemb};\label{line:v-dim};
\addplot[very thick,color=cycle2,mark=oplus,mark size=3pt] coordinates {
(2, 0.13263410215834281)
(4, 0.18496560458991274)
(8, 0.24990881632366194)
(16, 0.29027811015720706)
(32, 0.30967144140984226)
(64, 0.32232786256279)
(128, 0.33591224774893513)
(256, 0.34466234919945421)
(512, 0.33701671959642732)
(1024, 0.32380749276796034)
};\addlegendentry{\verseemb};\label{line:sv-ppr-dim};
\end{axis}
\end{tikzpicture} %
}\label{sfig:dimensionality}
}\hfill
\subfloat[PPR damping factor \(\alpha{}\)]{
\resizebox{0.32\textwidth}{!}{%
\begin{tikzpicture}
        \begin{axis}[
        ylabel=$\text{Micro-F1}$,
        xmin=0,
        xmax=1,
        ymin=0.1,
        ymax=0.4,
        xlabel=\(\alpha{}\),
        mark repeat={10},
        legend cell align=left,
        legend pos=south west
]
\addplot[very thick,color=cycle1,mark=*,mark size=3pt] coordinates {
(0.01, 0.18604260638414297)
(0.02, 0.19972578940035046)
(0.03, 0.24988303243188187)
(0.04, 0.27298697899163993)
(0.05, 0.28726728978776933)
(0.06, 0.29823985483199394)
(0.07, 0.30574619205296472)
(0.08, 0.31076663664000198)
(0.09, 0.31873375662467174)
(0.1, 0.32117069091136197)
(0.11, 0.32475122351738311)
(0.12, 0.32634136215976489)
(0.13, 0.33053645608498444)
(0.14, 0.33147293978733133)
(0.15, 0.3344175096605187)
(0.16, 0.33417555208477734)
(0.17, 0.33522560559055092)
(0.18, 0.33752584699337973)
(0.19, 0.33977179524790579)
(0.2, 0.3399309030294298)
(0.21, 0.34016656196544331)
(0.22, 0.34268860584080874)
(0.23, 0.3441429069592899)
(0.24, 0.34175145903745358)
(0.25, 0.34326140428620983)
(0.26, 0.3444175849638359)
(0.27, 0.34573948300268947)
(0.28, 0.34619328174753805)
(0.29, 0.34626617311209779)
(0.3, 0.34567174359103853)
(0.31, 0.34752512180277728)
(0.32, 0.34729238266989154)
(0.33, 0.34870231610324692)
(0.34, 0.35054912968435076)
(0.35, 0.34990873827976648)
(0.36, 0.34900380258711439)
(0.37, 0.35067771204750742)
(0.38, 0.35200964432820359)
(0.39, 0.35042259071310761)
(0.4, 0.35040962418354549)
(0.41, 0.35108700407747812)
(0.42, 0.35262584464630747)
(0.43, 0.3543718747722473)
(0.44, 0.34970932398293364)
(0.45, 0.35255310797729872)
(0.46, 0.35251086183275082)
(0.47, 0.35286983303858399)
(0.48, 0.35220844649689587)
(0.49, 0.35282441222360406)
(0.5, 0.35226307092547637)
(0.51, 0.35303718839964843)
(0.52, 0.35425686027736175)
(0.53, 0.35393664953626197)
(0.54, 0.35463282050691908)
(0.55, 0.35421387322773434)
(0.56, 0.35398078230320024)
(0.57, 0.35167754584993899)
(0.58, 0.35416100823310276)
(0.59, 0.35320439732096465)
(0.6, 0.3534994965064624)
(0.61, 0.35434098485782073)
(0.62, 0.35326096222014086)
(0.63, 0.35358087921309889)
(0.64, 0.35412493153560121)
(0.65, 0.35441586118642687)
(0.66, 0.35461898532368785)
(0.67, 0.35410974619584751)
(0.68, 0.35437053817198505)
(0.69, 0.35290734362059178)
(0.7, 0.35327018663549464)
(0.71, 0.35369760739378925)
(0.72, 0.35215476128333023)
(0.73, 0.35262958525078053)
(0.74, 0.35220662304668854)
(0.75, 0.35148935042995855)
(0.76, 0.35212164604139895)
(0.77, 0.35215166344954213)
(0.78, 0.34966496544810077)
(0.79, 0.35120038869760095)
(0.8, 0.35105477905693755)
(0.81, 0.3480531838155041)
(0.82, 0.35218619957228503)
(0.83, 0.34773886060328379)
(0.84, 0.34778302894173641)
(0.85, 0.348661878268757)
(0.86, 0.34589762873673352)
(0.87, 0.34595258577401183)
(0.88, 0.34547025521817076)
(0.89, 0.34576653901961607)
(0.9, 0.34200314083144839)
(0.91, 0.34280073872513528)
(0.92, 0.33958556281384367)
(0.93, 0.33740975642763493)
(0.94, 0.33322325742159692)
(0.95, 0.33039916805806402)
(0.96, 0.32406060515094126)
(0.97, 0.31326505804237992)
(0.98, 0.29617761622961303)
(0.99, 0.24663134830856806)
(0.99, 0.24663134830856806)
(0.99, 0.24663134830856806) %
};\addlegendentry{\fverseemb};\label{line:v-alpha};
\addplot[very thick,color=cycle2,mark=oplus,mark size=3pt] coordinates {
(0.01, 0.22975753416311967)
(0.02, 0.26503412959430267)
(0.03, 0.28019730734414566)
(0.04, 0.2875679382167895)
(0.05, 0.29228999705295794)
(0.06, 0.29385147959590752)
(0.07, 0.29782613226063037)
(0.08, 0.30348763195820311)
(0.09, 0.30808659742091954)
(0.1, 0.30734196605390629)
(0.11, 0.31110071401804668)
(0.12, 0.31475587177022474)
(0.13, 0.31737260658833516)
(0.14, 0.3202566366737557)
(0.15, 0.32275057430417314)
(0.16, 0.32477796974571632)
(0.17, 0.32491997099313363)
(0.18, 0.32807513364476998)
(0.19, 0.32845234516027005)
(0.2, 0.33147031393052018)
(0.21, 0.33187212407581645)
(0.22, 0.33569312454455225)
(0.23, 0.33417721877416545)
(0.24, 0.3366026522807809)
(0.25, 0.33715496774141529)
(0.26, 0.33752877304758677)
(0.27, 0.33926831844664834)
(0.28, 0.3376420729773657)
(0.29, 0.34134447643005977)
(0.3, 0.3408112277760823)
(0.31, 0.33970916271138257)
(0.32, 0.34251128019740001)
(0.33, 0.34044137083820369)
(0.34, 0.34331665262630046)
(0.35, 0.34453792815198186)
(0.36, 0.34481249339063247)
(0.37, 0.34733361862361084)
(0.38, 0.34783099340166468)
(0.39, 0.3483678816009621)
(0.4, 0.34523689348286013)
(0.41, 0.34747389176439414)
(0.42, 0.34776070116868985)
(0.43, 0.34796726613984219)
(0.44, 0.34994441232686113)
(0.45, 0.3483230536777438)
(0.46, 0.35118117384712733)
(0.47, 0.34942905089923831)
(0.48, 0.34948619477438103)
(0.49, 0.34970360670235451)
(0.5, 0.35037195526530995)
(0.51, 0.3508787551302584)
(0.52, 0.35287212055072664)
(0.53, 0.35232063916232798)
(0.54, 0.35309975791045117)
(0.55, 0.35113302662062551)
(0.56, 0.34972211163366479)
(0.57, 0.35127679209965312)
(0.58, 0.35120181213853235)
(0.59, 0.35112110343639258)
(0.6, 0.35231079530739123)
(0.61, 0.35173120402148661)
(0.62, 0.3515533447568675)
(0.63, 0.35268720849616464)
(0.64, 0.35385105510897796)
(0.65, 0.35219908697060293)
(0.66, 0.35243928531791491)
(0.67, 0.35060231676381298)
(0.68, 0.35111661129274097)
(0.69, 0.34880832290810188)
(0.7, 0.35136442687713598)
(0.71, 0.35017236713258504)
(0.72, 0.34976663205746539)
(0.73, 0.35077298770527249)
(0.74, 0.34902706124507943)
(0.75, 0.34908901338990972)
(0.76, 0.34834952702156285)
(0.77, 0.34759519619196699)
(0.78, 0.34462377905297364)
(0.79, 0.34262588750897371)
(0.8, 0.3439383265539564)
(0.81, 0.34327620222891603)
(0.82, 0.34280646281830163)
(0.83, 0.33843113406940412)
(0.84, 0.33766130601094507)
(0.85, 0.33707484414379252)
(0.86, 0.33358273494299895)
(0.87, 0.33261960577893857)
(0.88, 0.32663587305756087)
(0.89, 0.32581617172179611)
(0.9, 0.32262477850486249)
(0.91, 0.31609083574492974)
(0.92, 0.30742105382658075)
(0.93, 0.3035879518477233)
(0.94, 0.29359242923086437)
(0.95, 0.27997803451114156)
(0.96, 0.26306368002809388)
(0.97, 0.23890619097415761)
(0.98, 0.19700483554747894)
(0.99, 0.14823134456706669)
(0.99, 0.14823134456706669)
(0.99, 0.14823134456706669) %
};\addlegendentry{\verseemb};\label{line:sv-alpha};
\end{axis}
\end{tikzpicture} %
}\label{sfig:alpha}
}
\subfloat[Time, log scale]{
\resizebox{0.32\textwidth}{!}{%
\begin{tikzpicture}
        \begin{axis}[
    	xmode=log,
    	ymode=log,
        log basis x={10},%
        log basis y={10},
        xlabel=nodes,
        ylabel=$\text{seconds}$,
        xmax=1048576,
        xmin=8192,
        ymin=93.75,
        ymax=12000, %
        legend cell align=left,
        legend pos=north west
]
\addplot[very thick,color=cycle1,mark=*,mark size=3pt] coordinates {
(128, 0.010893451)
(256, 0.043573802)
(512, 0.174295209)
(1024, 0.5402453)
(2048, 1.93025458)
(4096, 9.914050839)
(8192, 42.556642763)
(16384, 171.190)
(32768, 701.639)
(65536, 2859.214)
(131072, 11422.61079)
};\addlegendentry{\fverseemb};\label{line:verse-time}
\addplot[very thick,color=cycle2,mark=oplus,mark size=3pt] coordinates {
(128, 2.339124599)
(256, 3.507897897)
(512, 6.143420775)
(1024, 11.32098132)
(2048, 22.976270655)
(4096, 41.674204901)
(8192, 80.417648415)
(16384, 167.711005514)
(32768, 317.947839123)
(65536, 673.501646848)
(131072, 1414.056190692)
(262144, 2650.066212017)
(524288, 5368.64508846)
(1048576, 10739.29017693)
};\addlegendentry{\verseemb};\label{line:sverse-time}
\addplot[very thick,color=cycle4,mark=triangle*,mark size=3pt] coordinates {
(128, 3.084920975)
(256, 5.864827386)
(512, 11.263897347)
(1024, 22.905526018)
(2048, 41.598715712)
(4096, 83.331275662)
(8192, 145.416250483)
(16384, 330.264641086)
(32768, 763.432211581)
(65536, 1549.492125138)
(131072, 3167.453462369)
(262144, 6674.6724824153)
(524288, 13152.20001096)
};\addlegendentry{\deepwalk};\label{line:dw-time}
\addplot[very thick,color=cycle6,mark=triangle*,mark options={rotate=180},mark size=3pt] coordinates {
(128, 2.46780945)
(256, 6.264020225)
(512, 14.41531385)
(1024, 21.17842084)
(2048, 40.83916522)
(4096, 76.91844478)
(8192, 155.133946)
(16384, 297.2511844)
(32768, 588.4850422)
(65536, 1279.575271)
(131072, 3818.578287)
(262144, 6558.542229)
(524288, 13723.10139)
};\addlegendentry{\nodetovec};\label{line:ntv-time}
\addplot[very thick,color=cycle5,mark=square*,mark options={rotate=90},mark size=3pt] coordinates {
(128, 1.48068567)
(256, 3.758412135)
(512, 8.64918831)
(1024, 15.88381563)
(2048, 30.62937392)
(4096, 57.68883359)
(8192, 116.3504595)
(16384, 222.9383883)
(32768, 441.3637817)
(65536, 959.6814534)
(131072, 2863.933715)
(262144, 4918.906672)
(524288, 10292.32604)
(1048576, 17029.5)
};\addlegendentry{\lineemb-2};\label{line:line-time}
\end{axis};
\end{tikzpicture} %
}\label{sfig:scalability-loglog}
}
\caption{{Classification performance} of various parameters in Fig.~\ref{sfig:dimensionality},~\ref{sfig:alpha} and scalability of different methods in Fig.~\ref{sfig:scalability-loglog}.}%
\label{fig:new-multiplot}
\end{figure*} %

\subsection{Visualization}\label{subsec:vizualization}

Last, we show how different embeddings are visualized on a plane.
We apply t-SNE~\cite{maaten2009} with default parameters to each embedding for {{a} subset} of 1500 nodes from the \cocitation{} dataset, equally distributed in 5 classes (i.e., conferences); we set the density areas for each class by Kernel Density Estimation. {Figure~\ref{fig:visualization} {depicts} the result.} VERSE produces well separated clusters with low noise, even finding distinctions among papers of the same community, namely ICDE (\tikz\draw[cycle5,fill=cycle5] (0,0) circle (.7ex);) and VLDB (\tikz\draw[cycle2,fill=cycle2] (0,0) circle (.7ex);).

\begin{figure}
\centering
\resizebox{0.9\columnwidth}{!}{\includegraphics{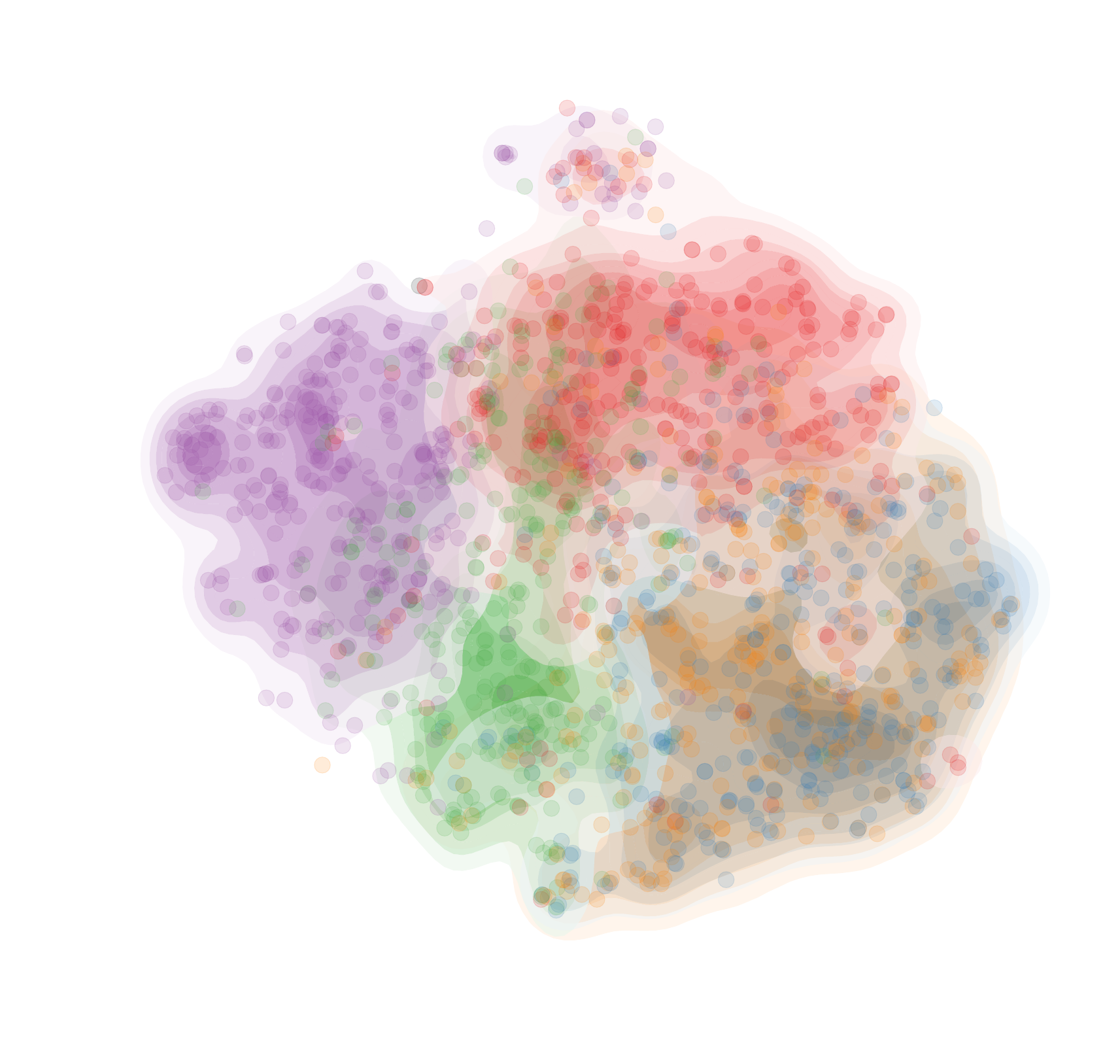}}\label{sfig:ak-v-ppr}
\caption{Visualizations of a subset of nodes from \cocitation{} graph with selected conferences: \protect\tikz\protect\draw[cycle2,fill=cycle2] (0,0) circle (.7ex);~VLDB, \protect\tikz\protect\draw[cycle5,fill=cycle5] (0,0) circle (.7ex);~ICDE, \protect\tikz\protect\draw[cycle3,fill=cycle3] (0,0) circle (.7ex);~KDD, \protect\tikz\protect\draw[cycle1,fill=cycle1] (0,0) circle (.7ex);~WWW, and \protect\tikz\protect\draw[cycle4,fill=cycle4] (0,0) circle (.7ex);~NIPS\@. Note that the number of nodes per class is the same for all conferences.}\label{fig:visualization}
\end{figure}  %

\balance %
\section{Conclusions}\label{sec:conclusion}

{We introduced a new perspective on graph embeddings: to be expressive, a graph embedding should capture \emph{some} similarity measure among nodes. Armed with this perspective, we developed a scalable embedding algorithm, \verseemb{}.}
In a departure from previous works in the area, \verseemb{} aims to reconstruct the distribution of any chosen similarity measure for each graph node.
Thereby, \verseemb{} brings in its scope a global view of the graph, while substantially reducing the number of parameters required for training.
\verseemb{} attains linear time complexity, {hence it scales} to large real graphs, while it only requires space to store the graph.
{Besides, we have shed light on some previous works on graph embeddings, looking at them and interpreting them through the prism of vertex similarity.}

Our thorough experimental study shows that,
{even instantiated with PPR as a {\em default\/} similarity notion,} \verseemb{} consistently outperforms state\hyp{}of\hyp{}the\hyp{}art approaches for graph embeddings in a plethora of graph tasks{, while a hyperparameter\hyp{}supervised variant does even better.}
{Thus, we have provided strong evidence that embeddings genuinely based on vertex similarity address graph mining challenges better than others.}

\newpage
\balance
\bibliographystyle{ACM-Reference-Format}
\bibliography{main}
\end{document}